\documentclass{article}[12pt]
\usepackage{amsthm}
\usepackage{amsmath,amssymb}
\usepackage{float}
\usepackage{geometry}
\usepackage{indentfirst}
\usepackage{array,multirow}
\usepackage[table,xcdraw]{xcolor}
\usepackage{bm}
\usepackage{tabularx,booktabs}
\usepackage{lscape}
\usepackage{enumitem}
\usepackage{chngcntr}
\usepackage[table]{xcolor}
\usepackage{bigstrut}
\setlength\bigstrutjot{3pt}
\usepackage{array}
\newcolumntype{C}[1]{>{\centering\arraybackslash}p{#1}}
\usepackage{footnote}
\usepackage[demo]{graphicx}
\usepackage{threeparttable,tablefootnote}
\usepackage[titletoc,toc,title]{appendix}
\usepackage[round]{natbib}
\bibliographystyle{plainnat}
\usepackage{authblk}
\usepackage{enumitem}
\usepackage{hhline}

\newcommand\numberthis{\addtocounter{equation}{1}\tag{\theequation}}


\usepackage{fancyhdr}
\pagestyle{fancy}
\fancyhf{}

\fancyhead[R]{\thepage}

\usepackage{tikz}

\theoremstyle{theorem}
\newtheorem{remark}{Remark}
\newtheorem{thm}{Theorem}
\newtheorem{assump}{Assumption}

\makeatletter
\renewcommand\@biblabel[1]{}

\makeatother

\makeatletter
\def\old@comma{,}
\catcode`\,=13
\def,{%
	\ifmmode%
	\old@comma\discretionary{}{}{}%
	\else%
	\old@comma
	\fi%
}
\makeatother

\usepackage{setspace}
\doublespacing

\begin{document}
	
	\begin{center} {\bf\large On the Choice of Instruments in Mixed Frequency Specification Tests}\end{center}
	\centerline{\textsc{Yun Liu$^1$%
			and Yeonwoo Rho$^{2}$\footnote[0]{
				Address for correspondence: Yeonwoo Rho, Department of Mathematical Sciences, Michigan Technological University, Houghton, MI 49931, USA. (Email: yrho@mtu.edu)}}}
	\centerline {\textsuperscript{1,2}\it  Michigan Technological University}
	\bigskip
	\centerline{\today}
	\bigskip

	\begin{abstract}
		Time averaging has been the traditional approach to handle mixed sampling frequencies. However, it ignores information possibly embedded in high frequency. Mixed data sampling (MIDAS) regression models provide a concise way to utilize the additional information in high-frequency variables. In this paper, we propose a specification test to choose between time averaging and MIDAS models, based on a Durbin-Wu-Hausman test. In particular, a set of instrumental variables is proposed and theoretically validated when the frequency ratio is large. As a result, our method tends to be more powerful than existing methods, as reconfirmed through the simulations.\\
		\textit{key words}: Mixed data sampling regression model; instrumental variable; Durbin-Wu-Hausman test; specification test; time averaging.
	\end{abstract}

	\section{Introduction}\label{introduction}
	
	In recent years, datasets that involve different sampling frequencies have drawn substantial attention in various fields. 
	Several methods were introduced to handle mixed-frequency variables in a regression model.
	One conventional approach is time averaging of high-frequency variables, where high-frequency variables are aggregated using a predetermined fixed-weight function. 
	Another is the autoregressive distributed lag (ADL) model, in which all high-frequency variables are used as regressors. 
	The mixed data sampling (MIDAS) regression model \citep{2004} was proposed to balance the complexity and the flexibility of these two approaches. In MIDAS models, the weight function is written as a nonlinear parametric function with a few parameters. The elements in the weight function do not move as freely as the ones in the ADL model due to the parametric restriction. They are still more flexible than those in time averaging since  parameters in the weight function are determined by data.
	This idea of concise yet data-driven reduction of information embedded in high sampling frequency has driven a recent surge of interest in MIDAS models \citep{foroni2013survey}.
	
	However, MIDAS models involve nonlinear estimation.  
	If the time averaging is good enough, there is no need to go through this nonlinear estimation.
	This motivates a specification test that helps decide between the  time averaging and the MIDAS models.
	There have yet been only a handful of such tests.
	\citet{andreou} presented a Durbin-Wu-Hausman (DWH) type test, designed to see whether there is an omitted variable bias  caused by overlooking the MIDAS effect. 
	\citet{Miller2016} presented two variable addition test (VAT) statistics. 
	In particular, the second VAT statistic, called a modified VAT statistic, was designed for nonstationary high-frequency variables. 
	\cite{Groenvik:Rho:2017} further extended Miller's first VAT statistic using a self-normalized approach. 
	
	In this paper,
	we shall further explore the DWH specification test introduced in \cite{andreou}. 
	In particular, the DWH test requires choosing appropriate instrumental variables, but there has not yet been a practical guidance so far. 
	We shall propose a set of instrumental variables that is suitable for this test.
	Section \ref{methodpart} presents details of such a choice, demonstrating its theoretical consistency when the frequency ratio is large enough.
	Section \ref{MC-S} presents finite sample comparisons. 
	All technical proofs  and full simulation results can be found in the appendix. 
	
	The following notations are used consistently throughout the manuscript.
	Let $T$ be the sample size at low frequency, and $m$ be the frequency ratio between the two sampling frequencies.
	$\mathbf{j}_t$ is a $T\times1$ vector with the $t$-th element being 1 and the rest 0. 
	$\mathbf{j}$ is a $T\times1$ vector of 1's. 
	Symbols $\mathbf{y}=(y_{1},\cdots,y_{T})'$,  $\mathbf{x}_{t,m}^{(m)}=\left(x_t,x_{t-1/m},\cdots,x_{t-(m-1)/m}\right)'$, 
	and
	$\mathbf{z}_t=(z_{1,t},\cdots,z_{p,t})'$
	are reserved for the low frequency variable, the high frequency variable, and  $p$ instrumental variables, respectively.
	We  use
	$\boldsymbol{\pi}=(\pi_1, \cdots, \pi_m)'$ to indicate  an $m\times1$ weight vector to aggregate the high frequency variable such that $\pi_i\geq0$ and $\sum_{i=1}^m \pi_i=1$. 
	The matrix $\mathbf{P}_{\mathbf{A}}=\mathbf{A}(\mathbf{A}'\mathbf{A})^{-1}\mathbf{A}'$ denotes the projection matrix onto the space spanned by the columns of $\mathbf{A}$,
  and 
   $\mathbf{M}_{\mathbf{A}}=\mathbf{I}-\mathbf{P}_{\mathbf{A}}$.
	For convenience, we  define the following matrices:
	$\mathbf{X}=[\mathbf{x}_{1,m}^{(m)},\cdots,\mathbf{x}_{T,m}^{(m)}]'$,
	$\mathbf{Z}=[\mathbf{z}_1,\cdots,\mathbf{z}_T]'$, and
	$\mathbf{X}^A=[\mathbf{j},\mathbf{X}\boldsymbol{\pi}_0]=[\mathbf{x}_1^A,\cdots,\mathbf{x}_T^A]'$, where $\mathbf{x}_t^A=(1,x_t^A)'$ is the $t$-th row of $\mathbf{X}^A$ and $\boldsymbol{\pi}_0$ is the predetermined weight vector.
  
	\section{Choice of Instrumental Variables Based on the DWH Test}\label{methodpart}
Consider a dataset with different sampling frequencies. 
Let  $\{y_t\}_{t=1}^T$ and $\{\mathbf{x}_{t,m}^{(m)}\}_{t=1}^T$
be the variables observed at lower and higher sampling frequencies, respectively.
The MIDAS model is constructed, aiming to model low-frequency variable using high-frequency variable:
\begin{equation}\label{MIDAS}
y_{t}=\beta_0+\left(\mathbf{j}'_t\mathbf{X}\boldsymbol{\pi}(\boldsymbol{\theta})\right)\beta_1+u_{t}
,~~~t=1,\ldots,T.
\end{equation}
The error process $\{u_{t}\}$ is stationary and uncorrelated with $\{\mathbf{x}_{t,m}^{(m)}\}$.
The vector $\boldsymbol{\pi}(\boldsymbol{\theta})=(\pi_1(\boldsymbol{\theta}),\ldots,\pi_m(\boldsymbol{\theta}))'$ consists of a function of a finite dimensional unknown parameter $\boldsymbol{\theta}$ such that  $\pi_i(\boldsymbol{\theta})\geq0$ and $\sum_{i=1}^m\pi_i(\boldsymbol{\theta})=1$.
This vector dictates how much weight would be assigned when aggregating the high-frequency variable, $\mathbf{x}_{t,m}^{(m)}$.

In a time averaging model, $\boldsymbol{\pi}=\boldsymbol{\pi}_0$ is a predetermined fixed-weight vector that does not depend on any unknown parameter $\boldsymbol{\theta}$. Without loss of generality, let the number of aggregated lags be the same as the frequency ratio $m$. Then the regression model (\ref{MIDAS}) becomes
\begin{equation}\label{FLAT}
y_{t}=\beta_0^A+\left(\mathbf{j}'_t\mathbf{X}\boldsymbol{\pi}_0\right)\beta_1^A+u_{t}^A=\beta_0^A+x_t^A\beta_1^A+u_{t}^A.
\end{equation}
We consider the test between time averaging (\ref{FLAT}) and MIDAS aggregation (\ref{MIDAS}), i.e. $H_0:\boldsymbol\pi=\boldsymbol\pi_0$ versus $H_a:\boldsymbol\pi=\boldsymbol\pi(\theta)$. 
The two commonly used weights for time averaging are the flat aggregation $\boldsymbol{\pi}_0=(1/m,\ldots,1/m)'$ and the end-of-period sampling $\boldsymbol{\pi}_0=(1,0,\ldots,0)'$. In this article, a  more general scenario of the end-of-period sampling is considered: a fixed number, $n$, of elements in $\boldsymbol{\pi}_0$ are assigned with positive values, where $n$ is independent of $m$. For brevity, we assign the first $n$ elements and leave the rest as zero, i.e.  $\boldsymbol{\pi}_0=(\pi_{0,1},\ldots,\pi_{0,n},0,\ldots,0)'$ where $\pi_{0,i}>0$ for $i=1,\cdots,n$ and $\sum_{i=1}^{n}\pi_{0,i}=1$. 

The least squares  (LS) principle can be applied to estimate the parameters $\beta_0^A$ and $\beta_1^A$ in model (\ref{FLAT}) when the null hypothesis is true. We call this estimator, $\widehat{\boldsymbol\beta}^A=(\widehat{\beta}_0^A,\widehat{\beta}_1^A)'=({\mathbf{X}^{A}}'\mathbf{X}^{A})^{-1}{\mathbf{X}^{A}}'\mathbf{y}$,  the NULL-LS estimator. 
By comparing models (\ref{MIDAS}) and (\ref{FLAT}),
the error process (\ref{FLAT}) can be rewritten as $u_{t}^A=u_t+\mathbf{j}'_t\mathbf{X}\left(\boldsymbol{\pi}(\boldsymbol{\theta})-\boldsymbol{\pi}_0\right)\beta_1$.
Under the null, $u_{t}^A$ is uncorrelated with $x_t^A$ since $u_t^A=u_t$.
However, under the alternative, $u_{t}^A$ is correlated with $x_t^A$ due to the omitted variable. 
Therefore, testing whether $\boldsymbol\pi=\boldsymbol\pi_0$ is equivalent to testing whether the NULL-LS estimator is consistent. 

To test the consistency of the NULL-LS estimator using a DWH-type test, another estimator that is consistent under both the null and the alternative is required. This estimator may not be efficient under the null. See \citet{Lee:2010}, for example. 
The two stage least squares (2SLS) estimator with proper instruments could be such an estimator. 
Assume that the instruments $\mathbf{z}_t$ are correlated with $x_t^A$, but uncorrelated with $u_{t}^A$. 
Consider a two stage regression model: the time-averaging model (\ref{FLAT}) and an auxiliary regression of the flat aggregated term $x_t^A$ on the instrumental variable $\mathbf{z}_t$ given as
\begin{equation}\label{firstR}
y_{t}=\beta_0+x_t^A\beta_1 +u_{t}^A\ \ \  \text{and}\ \ \ x_t^A=\mathbf{z}'_t\Gamma+\varepsilon_t,
\end{equation}
where $E\left(\varepsilon_t|x_t^A\right)=0$. 
The 2SLS estimator is $
\widehat{\boldsymbol\beta}=({\mathbf{X}^A}'\mathbf{P}_{\mathbf{Z}}\mathbf{X}^A)^{-1}({\mathbf{X}^A}'\mathbf{P}_{\mathbf{Z}}\mathbf{y})$. 
The bias of the 2SLS estimator $\widehat{\boldsymbol\beta}$ of $\boldsymbol\beta$ can be written as
\begin{equation}\label{dif}
\begin{aligned}
\widehat{\boldsymbol\beta}-\boldsymbol\beta
&=({\mathbf{X}^A}'\mathbf{P}_{\mathbf{Z}}{\mathbf{X}^A})^{-1}({\mathbf{X}^A}'\mathbf{P}_{\mathbf{Z}}){\mathbf{u}^A},
\end{aligned}
\end{equation}
where $\mathbf{u}^A=(u_1^A,\ldots,u_T^A)'$.
The following Assumption \ref{a1} is for the consistency of the NULL-LS under the null and for  the consistency of the 2SLS estimator under both the null and the alternative. 
	
	\begin{assump}\label{a1}
		Consider  the time-averaging model and the auxiliary regression in (\ref{firstR}).
		\begin{enumerate}[label=(\alph*)]
			\item  $T^{-1}{\mathbf{X}^A}'\mathbf{X}^A\xrightarrow{p}E\left({\mathbf{x}_t^A}{\mathbf{x}_t^A}'\right)=\mathbf{Q}_{XX}$ for some positive definite matrix $\mathbf{Q}_{XX}$; \label{a1a}
			
			\item $T^{1/2}\left(T^{-1}{\mathbf{X}^A}'\mathbf{u}^A-E\left({\mathbf{x}_t^A}{{u}_t^A}\right)\right)\xrightarrow{d}N(\mathbf{0},\boldsymbol\Omega)$ for some matrix $\boldsymbol\Omega$. Under the null, $E\left({\mathbf{x}_t^A}{{u}_t^A}\right)=0$; \label{a1b}
			
			\item Rank of $\mathbf{Z}$ is no less than the column rank of $\mathbf{X}^A$; \label{a1c}
			
			\item $T^{-1}\mathbf{Z}'\mathbf{Z}\xrightarrow{p}E\left(\mathbf{z}_t\mathbf{z}'_t\right)=\mathbf{Q}_{ZZ}$ for some positive definite matrix $\mathbf{Q}_{ZZ}$; \label{a1d}

			\item$T^{-1}{\mathbf{X}^A}'\mathbf{Z}\xrightarrow{p}E\left({\mathbf{x}_t^A}\mathbf{z}_t'\right)=\mathbf{Q}_{XZ}$ for some positive definite matrix $\mathbf{Q}_{XZ}$ with rank as the column rank of $\mathbf{X}^A$; \label{a1e}
			
			\item $T^{-1}{\mathbf{Z}}'\mathbf{u}^A\xrightarrow{p}E\left({\mathbf{z}_t}{u}_t^A\right)=\mathbf{0}$; \label{a1f}
			
			\item $T^{-1/2}{\mathbf{Z}}'\mathbf{u}^A\xrightarrow{d}N(\mathbf{0},\boldsymbol\Sigma_{Zu})$ for some positive definite matrix $\boldsymbol\Sigma_{Zu}$. \label{a1g}
			
		\end{enumerate}
	\end{assump}
	
	Assumptions \ref{a1}\ref{a1a} and \ref{a1}\ref{a1b} ensure the consistency of the NULL-LS estimator. 
	Assumption \ref{a1}\ref{a1a} indicates that $\mathbf{X}^A$ has full column rank.
	Assumption \ref{a1}\ref{a1b} implies the relation between the time-averaging term $\mathbf{X}^A$ and the error process $\mathbf{u}^A$, and their product should be asymptotically normal.
	Under the null, $\mathbf{X}^A$ and $\mathbf{u}^A$ should not be correlated, leading $E\left({\mathbf{x}_t^A}{{u}_t^A}\right)=0$. Under the alternative, $\mathbf{X}^A$ and $\mathbf{u}^A$ are allowed to be correlated, i.e., $E\left({\mathbf{x}_t^A}{{u}_t^A}\right)\neq\mathbf{0}$. 
	The variance-covariance matrix $\boldsymbol{\Omega}$ in Assumption \ref{a1}\ref{a1b} can be consistently estimated. 
	This can be done, for example, using  heteroskedasticity and autocorrelation consistent (HAC) estimators \citep{Newey:West:1987,Andrews:1991}. 
	Assumptions \ref{a1}\ref{a1d}--\ref{a1g} hold under both hypotheses. These ensure the consistency of the 2SLS estimator. 
	In particular, Assumption \ref{a1}\ref{a1d} requires that $\mathbf{Z}$ and $\mathbf{u}^A$ should be uncorrelated. 
It is worth noting that  the number of instrumental variables should be greater than or equal to the rank of $\mathbf{X}^A$.  
	Refer to \citet{Ruud:2000} for more details and explanations. 
	
	
		Now we derive our test statistic. If Assumption \ref{a1} holds, the asymptotic distributions of $\widehat{\boldsymbol{\beta}}^A$ under the null and $\widehat{\boldsymbol{\beta}}$  under both hypotheses can be written as followings:
	\begin{equation}
	\sqrt{T}(\widehat{\boldsymbol{\beta}}^A-\boldsymbol{\beta})\xrightarrow{d}N(\mathbf{0},\mathbf{V}^A)~~{\rm under}~H_0~~~{\rm and}~~~\sqrt{T}(\widehat{\boldsymbol{\beta}}-\boldsymbol{\beta})\xrightarrow{d}N(\mathbf{0},\mathbf{V})~~{\rm under}~H_0~{\rm and}~H_a,
	\end{equation}
	where
	$\mathbf{V}^A=\mathbf{Q}_{XX}^{-1}\boldsymbol\Omega\mathbf{Q}_{XX}^{-1}$ and 
	$\mathbf{V}=\left(\mathbf{Q}_{XZ}\mathbf{Q}_{ZZ}^{-1}\mathbf{Q}_{XZ}'\right)^{-1}\left(\mathbf{Q}_{XZ}\mathbf{Q}_{ZZ}^{-1}\boldsymbol\Sigma_{Zu}\mathbf{Q}_{ZZ}^{-1}\mathbf{Q}_{XZ}'\right)\left(\mathbf{Q}_{XZ}\mathbf{Q}_{ZZ}^{-1}\mathbf{Q}_{XZ}'\right)^{-1}$.		
		Since both $\widehat{\boldsymbol\beta}^A$ and $\widehat{\boldsymbol\beta}$ are consistent under the null, the difference between the two estimators, $\widehat{\Delta}=\widehat{\boldsymbol\beta}-\widehat{\boldsymbol\beta}^A$ converges to zero in probability. The main idea of the DWH test is to test whether $\widehat{\Delta}$ is significantly different from $\mathbf{0}$. This is equivalent to test whether
		${\mathbf{X}^A}'\mathbf{P}_{\mathbf{Z}}\mathbf{M}_{\mathbf{X}^A}\mathbf{y}$ is significantly different from   $\mathbf{0}$, since
		$\widehat{\Delta}$ can be written as
		$\widehat{\Delta}=\widehat{\boldsymbol\beta}-\widehat{\boldsymbol\beta}^A
		=({\mathbf{X}^A}'\mathbf{P}_{\mathbf{Z}}\mathbf{X}^A)^{-1}({\mathbf{X}^A}'\mathbf{P}_{\mathbf{Z}}\mathbf{M}_{\mathbf{X}^A}\mathbf{y})$
		and $({\mathbf{X}^A}'\mathbf{P}_{\mathbf{Z}}\mathbf{X}^A)^{-1}$ is positive definite.

			 We can easily see that
		\begin{equation}\label{relation}
		\mathbf{P}_{\mathbf{Z}}\mathbf{Z}=\mathbf{Z},\ \ \mathbf{M}_{\mathbf{Z}}\mathbf{Z}=\mathbf{0},\ \ \mathbf{M}_{\mathbf{X}^A}	\mathbf{X}\boldsymbol{\pi}_0=\mathbf{0},\ \ \mathbf{j}'\mathbf{M}_{\mathbf{X}^A}\mathbf{y}=0,~~{\rm and}
		\end{equation}
		\begin{equation}\label{mid}
		{\mathbf{X}^A}'\mathbf{P}_{\mathbf{Z}}\mathbf{M}_{\mathbf{X}^A}\mathbf{y}=\left[\mathbf{j},~	\mathbf{X}\boldsymbol{\pi}_0\right]'\mathbf{P}_{\mathbf{Z}}\mathbf{M}_{\mathbf{X}^A}\mathbf{y}={\left(0, ~(\mathbf{X}\boldsymbol{\pi}_0)'\mathbf{P}_{\mathbf{Z}}\mathbf{M}_{\mathbf{X}^A}\mathbf{y}
			\right)}'.
		\end{equation}
		Thus, $(\mathbf{X}\boldsymbol{\pi}_0)'\mathbf{P}_{\mathbf{Z}}\mathbf{M}_{\mathbf{X}^A}\mathbf{y}$ should be approximately zero under the null. 
		Let $\widehat{\boldsymbol{\varepsilon}}=\mathbf{M}_{\mathbf{Z}}\mathbf{X}\boldsymbol{\pi}_0$ and $\widehat{\mathbf{u}}^A=\mathbf{M}_{\mathbf{X}^A}\mathbf{y}$ indicate the fitted residuals from (\ref{firstR}). 
Consider a regression model $\widehat{\mathbf{u}}^A={\mathbf{X}^A}\boldsymbol\alpha+{\widehat{\boldsymbol\varepsilon}}\delta+\boldsymbol\upsilon$.  Applying Frisch$-$Waugh$-$Lovell (FWL) theorem, the OLS estimator $\widehat{\delta}$ of $\delta$  is 
		\begin{equation}
		\widehat{\delta}=\left[(\mathbf{M}_{\mathbf{X}^A}\mathbf{M}_{\mathbf{Z}}\mathbf{X}\boldsymbol{\pi}_0)'(\mathbf{M}_{\mathbf{X}^A}\mathbf{M}_{\mathbf{Z}}\mathbf{X}\boldsymbol{\pi}_0)\right]^{-1}(\mathbf{M}_{\mathbf{X}^A}\mathbf{M}_{\mathbf{Z}}\mathbf{X}\boldsymbol{\pi}_0)'\mathbf{M}_{\mathbf{X}^A}\widehat{\mathbf{u}}^A.
		\end{equation}
		Note that the latter part of $\widehat{\delta}$ can be derived as $(\mathbf{M}_{\mathbf{X}^A}\mathbf{M}_{\mathbf{Z}}\mathbf{X}\boldsymbol{\pi}_0)'\mathbf{M}_{\mathbf{X}^A}\widehat{\mathbf{u}}^A=(\mathbf{X}\boldsymbol{\pi}_0)'\mathbf{M}_{\mathbf{X}^A}{\mathbf{y}}-(\mathbf{X}\boldsymbol{\pi}_0)'\mathbf{P}_{\mathbf{Z}}\mathbf{M}_{\mathbf{X}^A}{\mathbf{y}}$. Since the third relation shown in (\ref{relation}) indicates that $(\mathbf{X}\boldsymbol{\pi}_0)'\mathbf{M}_{\mathbf{X}^A}{\mathbf{y}}=0$, $\widehat{\delta}=0$ is equivalent to  $(\mathbf{X}\boldsymbol{\pi}_0)'\mathbf{P}_{\mathbf{Z}}\mathbf{M}_{\mathbf{X}^A}\mathbf{y}=0$. Hence,
		testing whether $\widehat{\Delta}$ approaches to zero in probability can be viewed as testing if the coefficient $\widehat{\delta}$ is significantly different from zero.
	Consider the test statistic
			\begin{equation}	\lambda_T=T\widehat{{\delta}}'\left(\mathbf{b}'(\widehat{\mathbf{V}}-\widehat{\mathbf{V}}^A)\mathbf{b}\right)^{-1}\widehat{{\delta}},
			\end{equation}
			where $\mathbf{b}'=-\left[(\mathbf{M}_{\mathbf{X}^A}\mathbf{M}_{\mathbf{Z}}\mathbf{X}\boldsymbol{\pi}_0)'(\mathbf{M}_{\mathbf{X}^A}\mathbf{M}_{\mathbf{Z}}\mathbf{X}\boldsymbol{\pi}_0)\right]^{-1}\left[(\mathbf{X}\boldsymbol{\pi}_0)'\mathbf{P}_{\mathbf{Z}}\mathbf{X}^A\right]$, and $\widehat{\mathbf{V}}$ and $\widehat{\mathbf{V}}^A$ are consistent estimators of $\mathbf{V}$ and $\mathbf{V}^A$, respectively. 
		\begin{thm}\label{t1}
			
			Suppose Assumption \ref{a1}  holds. Under the null hypothesis,  $\lambda_T\xrightarrow{d}\chi^2_1$.
		\end{thm}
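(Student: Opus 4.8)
The plan is to carry out the classical Durbin--Wu--Hausman argument. Since $\widehat{\delta}$ is a scalar, $\lambda_T=T\,\widehat{\delta}^{\,2}\big/\bigl(\mathbf{b}'(\widehat{\mathbf{V}}-\widehat{\mathbf{V}}^A)\mathbf{b}\bigr)$, so it suffices to establish (i) $\sqrt{T}\,\widehat{\delta}\xrightarrow{d}N(0,\omega^2)$ under $H_0$ for a constant $\omega^2>0$, and (ii) $\mathbf{b}'(\widehat{\mathbf{V}}-\widehat{\mathbf{V}}^A)\mathbf{b}\xrightarrow{p}\omega^2$. Given (i)--(ii), Slutsky's theorem gives $\sqrt{T}\,\widehat{\delta}\big/\bigl(\mathbf{b}'(\widehat{\mathbf{V}}-\widehat{\mathbf{V}}^A)\mathbf{b}\bigr)^{1/2}\xrightarrow{d}N(0,1)$, and the continuous mapping theorem applied to $x\mapsto x^2$ yields $\lambda_T\xrightarrow{d}\chi^2_1$.

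The first ingredient is the algebraic identity $\widehat{\delta}=\mathbf{b}'\widehat{\Delta}$, where $\widehat{\Delta}=\widehat{\boldsymbol\beta}-\widehat{\boldsymbol\beta}^A$. From (\ref{mid}), ${\mathbf{X}^A}'\mathbf{P}_{\mathbf{Z}}\mathbf{M}_{\mathbf{X}^A}\mathbf{y}=(0,\,w)'$ with $w=(\mathbf{X}\boldsymbol{\pi}_0)'\mathbf{P}_{\mathbf{Z}}\mathbf{M}_{\mathbf{X}^A}\mathbf{y}$, so $\widehat{\Delta}=({\mathbf{X}^A}'\mathbf{P}_{\mathbf{Z}}\mathbf{X}^A)^{-1}\mathbf{e}_2\,w$ with $\mathbf{e}_2=(0,1)'$; since $(\mathbf{X}\boldsymbol{\pi}_0)'\mathbf{P}_{\mathbf{Z}}\mathbf{X}^A$ is precisely the second row of ${\mathbf{X}^A}'\mathbf{P}_{\mathbf{Z}}\mathbf{X}^A$, one obtains $\mathbf{b}'\widehat{\Delta}=-\bigl[(\mathbf{M}_{\mathbf{X}^A}\mathbf{M}_{\mathbf{Z}}\mathbf{X}\boldsymbol{\pi}_0)'(\mathbf{M}_{\mathbf{X}^A}\mathbf{M}_{\mathbf{Z}}\mathbf{X}\boldsymbol{\pi}_0)\bigr]^{-1}w$, which coincides with the FWL expression for $\widehat{\delta}$ obtained just before the statement. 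One next checks that $\mathbf{b}\xrightarrow{p}\mathbf{b}_0$ for a finite deterministic $\mathbf{b}_0$: both $T^{-1}(\mathbf{M}_{\mathbf{X}^A}\mathbf{M}_{\mathbf{Z}}\mathbf{X}\boldsymbol{\pi}_0)'(\mathbf{M}_{\mathbf{X}^A}\mathbf{M}_{\mathbf{Z}}\mathbf{X}\boldsymbol{\pi}_0)$ and $T^{-1}(\mathbf{X}\boldsymbol{\pi}_0)'\mathbf{P}_{\mathbf{Z}}\mathbf{X}^A$ converge in probability by Assumption \ref{a1}\ref{a1a}, \ref{a1}\ref{a1d}, \ref{a1}\ref{a1e}, and the first has a strictly positive limit.

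For (i), the excerpt already records the marginal limits $\sqrt{T}(\widehat{\boldsymbol\beta}-\boldsymbol\beta)\xrightarrow{d}N(\mathbf 0,\mathbf V)$ and $\sqrt{T}(\widehat{\boldsymbol\beta}^A-\boldsymbol\beta)\xrightarrow{d}N(\mathbf 0,\mathbf V^A)$ under $H_0$; running the same derivation jointly (both estimators are linear in $\mathbf{u}^A$ after substituting $\mathbf{M}_{\mathbf{X}^A}\mathbf{y}=\mathbf{M}_{\mathbf{X}^A}\mathbf{u}^A$, and the required joint CLT for $(T^{-1/2}{\mathbf{X}^A}'\mathbf{u}^A,\,T^{-1/2}\mathbf{Z}'\mathbf{u}^A)$ follows from Assumptions \ref{a1}\ref{a1b}, \ref{a1}\ref{a1g} and the Cramer--Wold device applied to $T^{-1/2}\sum_t(\mathbf{a}_1'\mathbf{x}_t^A+\mathbf{a}_2'\mathbf{z}_t)u_t^A$) yields joint asymptotic normality of $\sqrt{T}(\widehat{\boldsymbol\beta}-\boldsymbol\beta,\,\widehat{\boldsymbol\beta}^A-\boldsymbol\beta)$ with diagonal blocks $\mathbf V,\mathbf V^A$ and some off-diagonal block $\mathbf C$. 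Combining this with $\mathbf{b}\xrightarrow{p}\mathbf{b}_0$ and Slutsky's theorem, $\sqrt{T}\,\widehat{\delta}=\mathbf{b}'\sqrt{T}\,\widehat{\Delta}\xrightarrow{d}N\bigl(0,\,\mathbf{b}_0'(\mathbf V+\mathbf V^A-\mathbf C-\mathbf C')\mathbf{b}_0\bigr)$. The crucial point is the Hausman orthogonality $\mathrm{ACov}(\widehat{\boldsymbol\beta}^A,\widehat{\Delta})=\mathbf 0$, equivalently $\mathbf C=\mathbf V^A$: this holds because, under $H_0$, the NULL-LS estimator $\widehat{\boldsymbol\beta}^A$ is asymptotically efficient relative to the 2SLS estimator $\widehat{\boldsymbol\beta}$ (so the long-run cross-moments simplify, e.g.\ $\boldsymbol\Omega$ and the $\mathbf X^A$--$\mathbf Z$ long-run covariance reduce to scalar multiples of $\mathbf Q_{XX}$ and $\mathbf Q_{XZ}$). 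With $\mathbf C=\mathbf V^A$ we set $\omega^2=\mathbf{b}_0'(\mathbf V-\mathbf V^A)\mathbf{b}_0$, which is strictly positive because $\mathbf V-\mathbf V^A$ is positive semidefinite and, under Assumption \ref{a1} (the instruments genuinely add predictive content for $x_t^A$ beyond $\mathbf X^A$), $\mathbf{b}_0$ is not in its null space.

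Finally, (ii) is immediate: $\widehat{\mathbf V}\xrightarrow{p}\mathbf V$ and $\widehat{\mathbf V}^A\xrightarrow{p}\mathbf V^A$ by hypothesis and $\mathbf{b}\xrightarrow{p}\mathbf{b}_0$, whence $\mathbf{b}'(\widehat{\mathbf V}-\widehat{\mathbf V}^A)\mathbf{b}\xrightarrow{p}\mathbf{b}_0'(\mathbf V-\mathbf V^A)\mathbf{b}_0=\omega^2>0$, and $\lambda_T\xrightarrow{d}\chi^2_1$ follows as in the first paragraph. I expect the Hausman orthogonality $\mathbf C=\mathbf V^A$ to be the main obstacle, since it is precisely what justifies studentizing $\sqrt{T}\,\widehat{\delta}$ by $\mathbf{b}'(\widehat{\mathbf V}-\widehat{\mathbf V}^A)\mathbf{b}$ rather than by a direct estimator of $\mathrm{AVar}(\sqrt{T}\,\widehat{\Delta})$; a secondary point is verifying the joint CLT and the non-degeneracy $\omega^2>0$ from Assumption \ref{a1}.
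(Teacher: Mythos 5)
Your proposal is correct and follows essentially the same route as the paper: the algebraic reduction $\widehat{\delta}=\mathbf{b}'\widehat{\boldsymbol{\Delta}}$ via the FWL expression, followed by the classical Durbin--Wu--Hausman argument in which the asymptotic efficiency of the NULL-LS estimator under $H_0$ forces $\mathrm{AVar}(\sqrt{T}\,\mathbf{b}'\widehat{\boldsymbol{\Delta}})=\mathbf{b}_0'(\mathbf{V}-\mathbf{V}^A)\mathbf{b}_0$, yielding the $\chi^2_1$ limit since $\mathbf{b}$ is a two-vector. The only difference is one of bookkeeping: you spell out the joint CLT, the convergence $\mathbf{b}\xrightarrow{p}\mathbf{b}_0$, and the orthogonality $\mathbf{C}=\mathbf{V}^A$ explicitly, whereas the paper delegates exactly these steps to Section 5.1 of \citet{Lee:2010}.
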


The proof of Theorem \ref{t1} is presented in Appendix \ref{S2}. 
It is worth noting that Assumption 1 holds only when the instruments $\mathbf{z}_t$ are chosen carefully. More specifically, $\mathbf{z}_t$ should be correlated with the time-averaging term, $x_t^A$, but uncorrelated with $u_{t}^A$. This is to ensure Assumptions \ref{a1}\ref{a1e} and \ref{a1}\ref{a1f}.
Otherwise, the consistency of the 2SLS estimator may not be guaranteed. 
	However, in practice, it is difficult to find such instruments. 
	\citet{andreou} suggested using all or part of high-frequency variables as instruments.
	However, they did not provide any practical guidance that is theoretically supported.
	In fact, with their suggested choice of instruments, it is possible that the chosen instruments are correlated with the error process. 
	In this case,  the 2SLS estimators would not be consistent, which may lower the power. 
	In what follows, we shall propose a set of instruments that is theoretically valid for the DWH-type specification test.
	To derive theoretical properties, we assume following conditions on the instruments and the data generating process. 
	
	\begin{assump}\label{a2} 
		Consider assumptions for $k=0,1,\cdots,m-1, t=1,\cdots,T$,
		\begin{enumerate}[label=(\alph*)]
			\item The high-frequency processes $\{x_{t-k/m}\}$ and  $\{u_{t-k/m}\}$ are independently, identically distributed (i.i.d.) or follow stationary AR(1) processes with finite second moment respectively; \label{a2b}
			
			\item $\{u_{t-k/m}\}$ is uncorrelated with  $\{x_{t-k/m}\}$;\label{a2c}
			
			\item Suppose $\mathbf{u}_{t,m}^{(m)}=(u_t,u_{t-1/m}, \cdots, u_{t-(m-1)/m})'$ with mean zero and positive definite covariance matrix, the error process $\{u_t\}$ is an aggregated term of $\mathbf{u}_{t,m}^{(m)}$ with the weight vector $\boldsymbol{\pi}({\theta})=(\pi_1(\theta),\cdots,\pi_m(\theta))'$, i.e.,  $u_t={\mathbf{u}_{t,m}^{(m)}}'\boldsymbol\pi(\theta)$  where $\pi_j(\theta)=(2-j/m)^{4\theta}/\sum_{i=1}^m(2-i/m)^{4\theta}$. 
			\label{a2a}
		\end{enumerate}
	\end{assump}
	
	Under Assumption \ref{a2}, the low-frequency response variable $\{y_t\}$ is viewed as an MIDAS aggregation of the underlying high-frequency true process $\{y_{t-k/m}\}$,
	where $y_{t-k/m}=\beta_0+x_{t-k/m}\beta_1+u_{t-k/m}$.
	Note that $\{y_{t-k/m}\}$ is not observed in practice.
	
	If we choose too many high-frequency lags as instruments, it might lead to a problem of a large number of weak instruments. 
	As a consequence, the 2SLS estimator may be biased towards the NULL-LS estimator. 
	The bias tends to get worse when there are more excessive number of instruments compared to the number of endogenous regressors. 
	A brief explanation is presented by \citet{Greene}. 
	Based on the number of the parameters in (\ref{firstR}) and the consideration on possibly weak instruments, 
	we shall construct $p=2$ instrumental variables, $\mathbf{z}_t=(z_{1,t},z_{2,t})',\ t=1\cdots,T$, as linear combinations of the high-frequency regressor.
	Inspired by \citet{Miller2016}, we propose to choose weights of the instruments $\mathbf{z}_t$ as the following two decreasing sequences:
	\begin{equation}\label{Upsilon}
	\begin{aligned}
	\Upsilon_1&=(f_1(1), f_1(2), \cdots, f_1(m))',\ \text{where }f_1(j)=\dfrac{0.9^{j-1}}{\sum_{i=1}^{m}0.9^{i-1}},~~{\rm and}\\
	\Upsilon_2&=(f_2(1), f_2(2), \cdots, f_2(m))',\ \text{where }f_2(j)=\dfrac{m+1-j}{\sum_{i=1}^{m}(m+1-i)}.
	\end{aligned}
	\end{equation}
	It is worth noting that these weights are designed to decrease exponentially and linearly fast.
	This is to  mimic the behaviors of the MIDAS weights with exponential Almon lag and beta polynomials.
	Then the two instrumental variables can be written in a vector form as  $\mathbf{z}'_t={\mathbf{x}_{t,m}^{(m)}}'\Upsilon$, where  $\Upsilon=[\Upsilon_{1},\hspace{0.07cm}\Upsilon_{2}]$.
	The following theorem demonstrates that the proposed instruments are approximately valid when the frequency ratio is large.

	\begin{thm}\label{p1}
		Let $\mathbf{Z}_r=\mathbf{X}\Upsilon_r=(z_{r,1},\cdots,z_{r,T})'$ for $r=1,2$, where  $\Upsilon_r$ be as presented in (\ref{Upsilon}), be the two instrumental variables. Assume that Assumption \ref{a2} holds.
		Write $\mathbf{Z}=[\mathbf{Z}_1,\mathbf{Z}_2]$.
		\begin{enumerate}[label=(\alph*)]
			\item Under the null hypothesis,  $\mathbf{Z}$ satisfies Assumption \ref{a1}.
			
			\item Under the alternative hypothesis, $\mathbf{Z}$  satisfies Assumptions \ref{a1}\ref{a1a}--\ref{a1e}. For any sample size $T$, Assumptions \ref{a1}\ref{a1f} and \ref{a1g} are fulfilled approximately, as the frequency ratio $m$ approaches infinity. In fact, $E(z_{r,t}u_t^A)=O(m^{-1})$ for $r=1,2$.
		\end{enumerate}
	\end{thm}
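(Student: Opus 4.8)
\emph{Proof sketch.} The plan is to translate every condition in Assumption~\ref{a1} into a statement about the moments of the underlying high-frequency process and then invoke laws of large numbers and central limit theorems for that process. The key observation is that $z_{r,t}={\mathbf{x}_{t,m}^{(m)}}'\Upsilon_r$ is a fixed linear combination of the coordinates of $\mathbf{x}_{t,m}^{(m)}$ and, under Assumption~\ref{a2}, $u_t^A=u_t+\beta_1\,\mathbf{j}_t'\mathbf{X}(\boldsymbol\pi(\theta)-\boldsymbol\pi_0)$ is a fixed linear combination of the coordinates of $\mathbf{x}_{t,m}^{(m)}$ and $\mathbf{u}_{t,m}^{(m)}$; hence each of $\{\mathbf{x}_t^A{\mathbf{x}_t^A}'\}$, $\{\mathbf{z}_t\mathbf{z}_t'\}$, $\{\mathbf{x}_t^A\mathbf{z}_t'\}$ and $\{\mathbf{z}_t u_t^A\}$ is stationary and, in the i.i.d.\ case, independent across low-frequency periods (in the AR(1) case, mixing), with the moments needed for the LLN/CLT supplied by Assumption~\ref{a2}\ref{a2b}. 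This gives the convergences in Assumptions~\ref{a1}\ref{a1a}, \ref{a1}\ref{a1d} and \ref{a1}\ref{a1e}, and the asymptotic normality in Assumptions~\ref{a1}\ref{a1b} and \ref{a1}\ref{a1g}, in both parts of the theorem. The positive-definiteness and rank claims reduce to nonsingularity of the $2\times 2$ Gram-type matrices built from $\Upsilon=[\Upsilon_1,\Upsilon_2]$, $\boldsymbol\pi_0$ and $\boldsymbol\Gamma_x:=E(\mathbf{x}_{t,m}^{(m)}{\mathbf{x}_{t,m}^{(m)}}')$ (the constant regressor of $\mathbf{X}^A$ instrumenting itself): these hold because $\boldsymbol\Gamma_x$ is positive definite---diagonal in the i.i.d.\ case, a positive-definite Toeplitz matrix in the AR(1) case---and $\Upsilon$ has full column rank, the geometric sequence $\{0.9^{j-1}\}_j$ and the arithmetic sequence $\{m+1-j\}_j$ not being proportional; this also yields Assumption~\ref{a1}\ref{a1c}.

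For part~(a), the null imposes $\boldsymbol\pi(\theta)=\boldsymbol\pi_0$, so $u_t^A=u_t={\mathbf{u}_{t,m}^{(m)}}'\boldsymbol\pi_0$, and the only conditions left to verify are the $E(\mathbf{x}_t^A u_t^A)=\mathbf{0}$ part of Assumption~\ref{a1}\ref{a1b} and Assumption~\ref{a1}\ref{a1f}, that is, $E(x_t^A u_t)=0$ and $E(z_{r,t}u_t)=0$. Both are immediate: $x_t^A$ and $z_{r,t}$ are linear combinations of $\{x_{t-k/m}\}$, $u_t$ is a linear combination of $\{u_{t-k/m}\}$, the two families are uncorrelated by Assumption~\ref{a2}\ref{a2c}, and $E(u_t)=0$. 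Hence $\mathbf{Z}$ satisfies Assumption~\ref{a1} under $H_0$.

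For part~(b), write $u_t^A=u_t+\beta_1 w_t$ with $w_t={\mathbf{x}_{t,m}^{(m)}}'(\boldsymbol\pi(\theta)-\boldsymbol\pi_0)$, again stationary with finite variance. Assumptions~\ref{a1}\ref{a1a}--\ref{a1e} impose no restriction on the joint behaviour of $\mathbf{Z}$ and $\mathbf{u}^A$---Assumption~\ref{a1}\ref{a1b} explicitly allows $E(\mathbf{x}_t^A u_t^A)\neq\mathbf{0}$---so they follow exactly as in part~(a), and everything reduces to the order of $E(\mathbf{z}_t u_t^A)$. Since $E(z_{r,t}u_t)=0$ as before,
\[
E(z_{r,t}u_t^A)=\beta_1 E(z_{r,t}w_t)=\beta_1\sum_{j=1}^m\sum_{\ell=1}^m f_r(j)\bigl(\pi_\ell(\theta)-\pi_{0,\ell}\bigr)\,\mathrm{Cov}\!\left(x_{t-(j-1)/m},\,x_{t-(\ell-1)/m}\right),
\]
the term in $(Ex_{t-k/m})^2$ having dropped out since $\sum_j f_r(j)=1$ and $\sum_\ell(\pi_\ell(\theta)-\pi_{0,\ell})=0$. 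The core of the argument is that this double sum is $O(m^{-1})$. I would first control the weights: writing $\sum_{i=1}^m(2-i/m)^{4\theta}$ as a Riemann sum for $\int_0^1(2-s)^{4\theta}\,ds>0$ shows it is $\Theta(m)$, whence $\max_{1\le\ell\le m}\pi_\ell(\theta)=O(m^{-1})$; likewise $\max_{1\le j\le m}f_2(j)=O(m^{-1})$, while $\sum_j f_r(j)=1$. Substituting $\mathrm{Cov}(x_{t-(j-1)/m},x_{t-(\ell-1)/m})=\sigma_x^2\rho^{|j-\ell|}$ (with $\rho=0$ in the i.i.d.\ case) and grouping the pairs $(j,\ell)$ by $h=|j-\ell|$, the double sum is dominated by $\sigma_x^2\sum_{h\ge0}|\rho|^h\bigl|\sum_{|j-\ell|=h}f_r(j)(\pi_\ell(\theta)-\pi_{0,\ell})\bigr|$; each inner sum is $O(m^{-1})$, since $\sum_{|j-\ell|=h}f_r(j)\pi_\ell(\theta)\le 2\max_\ell\pi_\ell(\theta)=O(m^{-1})$ and $\sum_{|j-\ell|=h}f_r(j)\pi_{0,\ell}=O(m^{-1})$ (immediate for flat aggregation), and $\sum_{h\ge0}|\rho|^h=(1-|\rho|)^{-1}<\infty$. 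Hence $E(z_{r,t}u_t^A)=O(m^{-1})$, so $E(\mathbf{z}_t u_t^A)\to\mathbf{0}$ and Assumption~\ref{a1}\ref{a1f} holds in the limit. For Assumption~\ref{a1}\ref{a1g}, decompose $T^{-1/2}\mathbf{Z}'\mathbf{u}^A=T^{-1/2}\sum_{t=1}^T\{\mathbf{z}_t u_t^A-E(\mathbf{z}_t u_t^A)\}+T^{1/2}E(\mathbf{z}_t u_t^A)$: the first term tends to $N(\mathbf{0},\boldsymbol\Sigma_{Zu})$ by the central limit theorem of the first paragraph and the second equals $T^{1/2}O(m^{-1})$, negligible for fixed $T$ as $m\to\infty$, so Assumption~\ref{a1}\ref{a1g} holds approximately; positive-definiteness of $\boldsymbol\Sigma_{Zu}$ follows from non-degeneracy of $\mathbf{z}_t u_t^A$.

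The main obstacle is this $O(m^{-1})$ estimate: one must pin down the precise order---and, if $\theta$ ranges over a compact set, the uniformity---of the MIDAS normalising constant $\sum_{i=1}^m(2-i/m)^{4\theta}$, verify that the mean term cancels exactly rather than merely to leading order, and check that the AR(1) cross-covariance contributions do not accumulate over the $O(m)$ diagonals $h=0,1,\ldots,m-1$. Everything else is routine: the LLN/CLT steps rely only on the linear-combination structure and the mixing of the high-frequency data, and the rank and positive-definiteness checks rely only on the linear independence of $\Upsilon_1$ and $\Upsilon_2$ and the positive-definiteness of $\boldsymbol\Gamma_x$.
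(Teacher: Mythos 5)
Your argument goes through for the flat weights $\boldsymbol\pi_0=\mathbf{j}/m$, where it is essentially the paper's case (a), but it breaks down for the general end-of-period weights $\boldsymbol\pi_0=(\pi_{0,1},\ldots,\pi_{0,n},0,\ldots,0)'$ that the theorem is also meant to cover, and the failure traces back to your choice of decomposition of $u_t^A$. You set $u_t^A=u_t+\beta_1 w_t$ with $w_t={\mathbf{x}_{t,m}^{(m)}}'(\boldsymbol\pi(\theta)-\boldsymbol\pi_0)$, i.e.\ you identify $\beta_1^A$ with the MIDAS coefficient $\beta_1$. This reduces everything to $E(z_{r,t}u_t^A)=\beta_1\Upsilon_r'\boldsymbol\Phi(\boldsymbol\pi(\theta)-\boldsymbol\pi_0)$, and your key inner bound $\sum_{|j-\ell|=h}f_r(j)\pi_{0,\ell}=O(m^{-1})$ --- which you yourself flag as ``immediate for flat aggregation'' only --- is false for $r=1$ under end-of-period sampling: $f_1(1)=1/\sum_{i=1}^m 0.9^{i-1}\to 0.1$ is bounded away from zero, so in the i.i.d.\ case $\Upsilon_1'\boldsymbol\Phi\boldsymbol\pi_0=\sigma_x^2\sum_{j=1}^{n}f_1(j)\pi_{0,j}$ converges to a strictly positive constant and $E(z_{1,t}u_t^A)\to-\beta_1\sigma_x^2\sum_{j=1}^{n}f_1(j)\pi_{0,j}\neq 0$. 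Under your definition of $u_t^A$ the stated conclusion $E(z_{r,t}u_t^A)=O(m^{-1})$ is therefore simply not true in that case, so no refinement of the estimate can close the gap.

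The paper avoids this by taking $\boldsymbol\beta^A$ to be the pseudo-true (population least-squares projection) coefficient, $\boldsymbol\beta^A=\{E(\mathbf{x}_t^A{\mathbf{x}_t^A}')\}^{-1}E(\mathbf{x}_t^A{\mathbf{x}_t(\theta)}')\boldsymbol\beta$, so that $u_t^A$ is the projection residual. The resulting identity is
\begin{equation*}
E(\mathbf{z}_tu_t^A)=\beta_1\Upsilon'\left(\boldsymbol\Phi\boldsymbol\pi(\theta)-(\boldsymbol\pi_0'\boldsymbol\Phi\boldsymbol\pi_0)^{-1}\,\boldsymbol\pi_0'\boldsymbol\Phi\boldsymbol\pi(\theta)\,\boldsymbol\Phi\boldsymbol\pi_0\right),
\end{equation*}
whose second term carries the extra scalar factor $\boldsymbol\pi_0'\boldsymbol\Phi\boldsymbol\pi(\theta)=O(m^{-1})$; that factor is exactly what rescues the bound when $\Upsilon_1'\boldsymbol\Phi\boldsymbol\pi_0=O(1)$, since $\boldsymbol\pi_0'\boldsymbol\Phi\boldsymbol\pi_0$ stays bounded away from zero for fixed $n$. (For flat weights the projection ratio equals one in the i.i.d.\ case and the two decompositions coincide in order, which is why you do not see the problem there.) Your remaining machinery --- the cancellation of the mean term via $\mathbf{j}'\boldsymbol\pi(\theta)=\mathbf{j}'\boldsymbol\pi_0=1$, the Riemann-sum control of $S_\pi=\Theta(m)$, and the grouping of the AR(1) covariances by diagonal with $\sum_h|d|^h<\infty$ --- is sound and in places cleaner than the paper's term-by-term bounds, but it needs to be applied to the projection-residual form of $u_t^A$ above.
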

	The proof of Theorem \ref{p1} can be found in Appendix \ref{S3}. Under both the null and the alternative, it is easy to see that {${z}_{r,t}$ is correlated with ${x}_t^A$. 
The main result of Theorem \ref{p1} is that ${z}_{r,t}$ and ${u}_t^A$ are asymptotically uncorrelated when the frequency ratio is large, with the rate $E(z_{r,t}u_t^A)=O(m^{-1})$. Hence, the 2SLS estimator using our choice of the instruments is consistent when the frequency ratio $m$ is large.
	On the other hand, when $m$ is small, $T^{-1}{\mathbf{Z}}'\mathbf{u}^A$ converges, in probability, to a nonzero constant. 
	Thus, the DWH specification test with our choice of instruments would only work when $m$ is large enough.
	This explains the low power of our test in finite samples when $m$ is small in the next section.

	\section{Monte Carlo Simulations}\label{MC-S}
	
	In this section, we examine finite sample sizes and powers of our method and two other comparable methods in literature: the second test presented in \citet{andreou} (AGK, hereafter) and the unmodified VAT test in \citet{Miller2016}.
	We first briefly introduce algorithms of the methods in comparison. 
	
	\textbf{Algorithm 1} [Our Method]
	
	\begin{enumerate}
		\item Obtain $x_t^A={\mathbf{x}_{t,m}^{(m)}}'\boldsymbol\pi_0$. 
		Choose $\mathbf{z}_t={\mathbf{x}_{t,m}^{(m)}}'\Upsilon$ with $\Upsilon$ in (\ref{Upsilon}). 
		Regress $y_{t}$ on $x^A_t$ to obtain the fitted error process $\widehat{u}_{t}^A$.
		Regress $x^A_t$ on $\mathbf{z}_t$ to obtain the fitted error processes  $\widehat{\varepsilon}_t$.
		\item Regress $\widehat{u}^A_{t}$ on ${x^A_t}$ and $\widehat{\varepsilon}_t$ using $\widehat{u}^A_{t}=\alpha_0+{x^A_t}\alpha+\widehat{\varepsilon}_t\delta+\upsilon_t$. Test if the LS estimator $\widehat{\delta}$ of $\delta$ is significantly different from zero using a $t$ test.
		The standard error is calculated using a heteroscedasticity and autocorrelation consistent (HAC) estimator \citep{Newey:West:1987,Andrews:1991}.
		
	\end{enumerate}

	\textbf{Algorithm 2} [Miller's Method]
	
	\begin{enumerate}
		\item Obtain $x_t^A={\mathbf{x}_{t,m}^{(m)}}'\boldsymbol\pi_0$. Choose $\mathbf{z}_t={\mathbf{x}_{t,m}^{(m)}}'\Upsilon$ with $\Upsilon$ in (\ref{Upsilon}). Regress $y_{t}$ on $x^A_{t}$, and obtain the fitted residual $\widehat{u}_{t}^A$.\label{1}
		\item Regress $\widehat{u}_{t}^A$ on ${x^A_t}$ and $\mathbf{z}_t$ 
		using $\widehat{u}^A_{t}=\alpha_0+{x^A_t}\alpha+\mathbf{z}_t'\phi+\upsilon_t$. 
		Test if the LS estimator $\widehat{\phi}$ of $\phi$ is significantly different from zero using a Wald statistic and a HAC covariance estimator.
	\end{enumerate}
	
	\begin{remark}{\rm 
			The AGK method can also be implemented using Algorithm 1. 
			To limit the number of instruments, the first two regressors of the high-frequency variable are used in our simulations.
	}\end{remark}
	
	\begin{remark}\label{remark2}{\rm 
			It is worth noting that our method and Miller's unmodified VAT are  similar.
			Both methods utilize the two MIDAS-type aggregations, $\mathbf{z}_t$, of the high-frequency variable.
			While our method uses $\mathbf{z}_t$ as instruments under the classical framework with omitted variables, 
			Miller's use of $\mathbf{z}_t$ is more direct. 
			Miller's method searches whether the elements of $\mathbf{z}_t$  have any significant  effect on residual of $y_t$ after taking time averaging into account.
	}\end{remark}
		
	To make the results comparable, we use a simulation setting similar to the one proposed by \citet{Miller2016}.
	At high-frequency level, data are generated with $y_{t-k/m}=x_{t-k/m}\beta+u_{t-k/m}$ for $t=1,\ldots,T$, $k=0,\ldots,m-1$. 
	The high-frequency processes $\{x_{t-k/m}\}$ and $\{u_{t-k/m}\}$  are generated  as  stationary AR(1) processes given by $u_{t-k/m}=cu_{t-(k+1)/m}+\eta_{t-k/m}$ and $x_{t-k/m}=dx_{t-(k+1)/m}+\eta^*_{t-k/m}$, where $\{\eta_{t-k/m}\}$ and $\{\eta^*_{t-k/m}\}$ are i.i.d. $N(0,1)$. 
	Let $\beta=10$. Denote $\mathbf{y}_{t,m}^{(m)}=(y_t,y_{t-1/m}, \cdots, y_{t-(m-1)/m})'$ and $\mathbf{u}_{t,m}^{(m)}$ be the unobserved high-frequency response and the error process between time $t-1$ and $t$.
	Let $\boldsymbol{\pi}_0=\mathbf{j}/m$ and  $\boldsymbol{\pi}(\theta)=(\pi_1(\theta),\cdots,\pi_m(\theta))$, where $\pi_j(\theta)$ is defined in  Assumption \ref{a2}\ref{a2a}. 
	The low-frequency processes are generated as $y_{t}={\mathbf{y}_{t,m}^{(m)}}'\boldsymbol\pi(\theta)$ and $u_{t}={\mathbf{u}_{t,m}^{(m)}}'\boldsymbol\pi(\theta)$. 
	Here, $\theta=\theta_0=0$ indicates the flat aggregation, which corresponds to the null.
	If $\theta\neq0$, the weights are no longer flat. 
	Let $\theta=\theta_0+k$ where $k\in\{0.1,0.2,\cdots,1.9,2.0\}$ represent MIDAS-type alternatives. 
	The nominal level is $0.05$. 
	$R=2000$ Monte Carlo replications are generated. 
	The sample sizes is $T\in\{125,512\}$. 
	The frequency ratio is $m\in\{4,150,365\}$.

		\begin{table}[H]
			\setlength\tabcolsep{1.6pt}
			\label{TABLE}
			\centering
			\footnotesize
			\caption[Caption for LOF]{\footnotesize Empirical Sizes and Powers of our method (new), AGK, and Miller's method in the Representative Simulation Model}
			\begin{tabular}{ccccc|cccccccccccccccccccc}
				\hhline{=========================}
				\hline
				T     & m     & c     &   $k$    & 0.0   & 0.1   & 0.2   & 0.3   & 0.4   & 0.5   & 0.6   & 0.7   & 0.8   & 0.9   & 1.0   & 1.1   & 1.2   & 1.3   & 1.4   & 1.5   & 1.6   & 1.7   & 1.8   & 1.9   & 2.0 \\
				\hline
				\multirow{18}[10]{*}{{125}} & \multirow{6}[3]{*}{{4}} & \multirow{3}[1]{*}{{0.0}} & Miller & 6.3   &  100 & 100  & 100   & 100   & 100   & 100   & 100   & 100   & 100   & 100   & 100   & 100   & 100   & 100   & 100   & 100   & 100   & 100   & 100   & 100 \\
				&       &       & AGK   & 6.4   & 100   & 100   & 100   & 100   & 100   & 100   & 100   & 100   & 100   & 100   & 100   & 100   & 100   & 100   & 100   & 100   & 100   & 100   & 100   & 100 \\
				&       &       & New   & \textbf{5} &  \textbf{0} &  \textbf{0} &  \textbf{0} &  \textbf{10} &  \textbf{66} & 97  & 100  & 100   & 100   & 100   & 100   & 100   & 100   & 100   & 100   & 100   & 100   & 100   & 100   & 100 \\
				\cline{3-25}
				&       & \multirow{3}[2]{*}{{0.8}} & Miller & 7   & 99 & 100 & 100 & 100 & 100 & 100  & 100  & 100   & 100  & 100   & 100   & 100   & 100   & 100   & 100   & 100   & 100   & 100   & 100   & 100 \\
				&       &       & AGK   & 6.2   & 99    & 100   & 100   & 100   & 100   & 100   & 100   & 100   & 100   & 100   & 100   & 100   & 100   & 100   & 100   & 100   & 100   & 100   & 100   & 100 \\
				&       &       & New   & \textbf{5.3} & \textbf{4} & \textbf{1} & \textbf{4} & \textbf{14} & \textbf{46} & \textbf{81} & 97  & 100  & 100   & 100   & 100   & 100   & 100   & 100   & 100   & 100   & 100   & 100   & 100   & 100 \\
				\cline{2-25}
				& \multirow{6}[4]{*}{{150}} & \multirow{3}[2]{*}{{0.0}} & Miller & 6.4 & 100 & 100  & 100   & 100   & 100   & 100   & 100   & 100   & 100   & 100   & 100   & 100   & 100   & 100   & 100   & 100   & 100   & 100   & 100   & 100 \\
				&       &       & AGK   & 6     & \textbf{26} & \textbf{39} & \textbf{46} & \textbf{51} & \textbf{56} & \textbf{60} & \textbf{63} & \textbf{67} & \textbf{70} & \textbf{72} & \textbf{74} & \textbf{76} & \textbf{77} & \textbf{79} & \textbf{90} & \textbf{81} & \textbf{82} & \textbf{83} & \textbf{84} & \textbf{84} \\
				&       &       & New   & \textbf{5.5}   & 100   & 100   & 100   & 100   & 100   & 100   & 100   & 100   & 100   & 100   & 100   & 100   & 100   & 100   & 100   & 100   & 100   & 100   & 100   & 100 \\
				\cline{3-25}
				&       & \multirow{3}[2]{*}{{0.8}} & Miller & \textbf{5.7}   & \textbf{72} & 100 & 100   & 100   & 100   & 100   & 100   & 100   & 100   & 100   & 100   & 100   & 100   & 100   & 100   & 100   & 100   & 100   & 100   & 100 \\
				&       &       & AGK   & \textbf{5.7} & \textbf{8} & \textbf{13} & \textbf{20} & \textbf{28} & \textbf{35} & \textbf{40} & \textbf{45} & \textbf{51} & \textbf{56} & \textbf{60} & \textbf{64} & \textbf{66} & \textbf{69} & \textbf{71} & \textbf{73} & \textbf{75} & \textbf{76} & \textbf{77} & \textbf{79} & \textbf{80} \\
				&       &       & New   & \textbf{5.7} & \textbf{73} & 100  & 100   & 100   & 100   & 100   & 100   & 100   & 100   & 100   & 100   & 100   & 100   & 100   & 100   & 100   & 100   & 100   & 100   & 100 \\
				\cline{2-25}
				& \multirow{6}[3]{*}{{365}} & \multirow{3}[2]{*}{{0.0}} & Miller & 6.6     & 100 & 100  & 100   & 100   & 100   & 100   & 100   & 100   & 100   & 100   & 100  & 100  & 100  & 100  & 100  & 100  & 100  & 100  & 100  & 100 \\
				&       &       & AGK   & 5.8   & \textbf{14} & \textbf{18} & \textbf{20} & \textbf{23} & \textbf{25} & \textbf{27} & \textbf{30} & \textbf{33} & \textbf{35} & \textbf{37} & \textbf{39} & \textbf{41} & \textbf{43} & \textbf{45} & \textbf{47} &\textbf{49} & \textbf{52} & \textbf{54} & \textbf{55} & \textbf{56} \\
				&       &       & New   & \textbf{4.6} & 100   & 100   & 100   & 100   & 100   & 100   & 100   & 100   & 100   & 100   & 100   & 100   & 100   & 100   & 100   & 100   & 100   & 100   & 100   & 100 \\
				\cline{3-25}
				&       & \multirow{3}[1]{*}{ \rotatebox[origin=c]{90}{0.8}} & Miller & 6.5 & \textbf{70} & 100  & 100   & 100   & 100   & 100   & 100   & 100   & 100   & 100   & 100   & 100   & 100   & 100   & 100   & 100   & 100   & 100   & 100   & 100 \\
				&       &       & AGK   & 5.8   & \textbf{7} & \textbf{9} & \textbf{12} & \textbf{15} & \textbf{16} & \textbf{18} & \textbf{21} & \textbf{23} & \textbf{27} & \textbf{28} & \textbf{30} & \textbf{32} & \textbf{35} & \textbf{37} & \textbf{39} & \textbf{41} & \textbf{43} & \textbf{45} & \textbf{47} & \textbf{48} \\
				&       &       & New   & \textbf{4.6}   & \textbf{76} & 100  & 100   & 100   & 100   & 100   & 100   & 100   & 100   & 100   & 100   & 100   & 100   & 100   & 100   & 100   & 100   & 100   & 100   & 100 \\
				\hline
				\multirow{18}[10]{*}{{512}} & \multirow{6}[3]{*}{{4}} & \multirow{3}[1]{*}{{0.0}} & Miller & 6.3 & 100 & 100 & 100 & 100  & 100  & 100  & 100   & 100   & 100   & 100   & 100   & 100   & 100   & 100   & 100   & 100   & 100   & 100   & 100   & 100 \\
				&       &       & AGK   & \textbf{5} & 100   & 100   & 100   & 100   & 100   & 100   & 100   & 100   & 100   & 100   & 100   & 100   & 100   & 100   & 100   & 100   & 100   & 100   & 100   & 100 \\
				&       &       & New   & 5.4   & \textbf{6} & \textbf{5} & \textbf{3} & \textbf{1} & \textbf{0} & \textbf{0} & \textbf{0} & \textbf{0} & \textbf{1} & \textbf{3} & \textbf{14} & \textbf{42} & \textbf{79} & 96  & 99  & 100   & 100   & 100   & 100   & 100 \\
				\cline{3-25}
				&       & \multirow{3}[2]{*}{{0.8}} & Miller & 5.6 & \textbf{62} & 100 & 100  & 100  & 100   & 100   & 100   & 100   & 100   & 100   & 100   & 100   & 100   & 100   & 100   & 100   & 100   & 100   & 100   & 100 \\
				&       &       & AGK   & \textbf{5.1} & \textbf{61} & 99  & 100   & 100   & 100   & 100   & 100   & 100   & 100   & 100   & 100   & 100   & 100   & 100   & 100   & 100   & 100   & 100   & 100   & 100 \\
				&       &       & New   & 5.4   & \textbf{6} & \textbf{6} & \textbf{5} & \textbf{4} & \textbf{3} & \textbf{2} & \textbf{3} & \textbf{4} & \textbf{5} & \textbf{9} & \textbf{14} & \textbf{24} & \textbf{40} & \textbf{57} & \textbf{75} & \textbf{88} & 96    & 99  & 100  & 100 \\
				\cline{2-25}
				& \multirow{6}[4]{*}{{150}} & \multirow{3}[2]{*}{{0.0}} & Miller & 6.4 & 100 & 100 & 100 & 100 & 100  & 100  & 100   & 100   & 100   & 100   & 100   & 100   & 100   & 100   & 100   & 100   & 100   & 100   & 100   & 100 \\
				&       &       & AGK   & \textbf{5.5}   & \textbf{22} & \textbf{53} & \textbf{75} & \textbf{85} & 90  & 93  & 94  & 95  & 96  & 96  & 97  & 97  & 97  & 98  & 98  & 98    & 98  & 98  & 98  & 98 \\
				&       &       & New   & \textbf{5.5}   & 100   & 100   & 100   & 100   & 100   & 100   & 100   & 100   & 100   & 100   & 100   & 100   & 100   & 100   & 100   & 100   & 100   & 100   & 100   & 100 \\
				\cline{3-25}
				&       & \multirow{3}[2]{*}{{0.8}} & Miller & 6.1 & \textbf{23} & \textbf{71} & 97  & 100  & 100   & 100   & 100   & 100   & 100   & 100   & 100   & 100   & 100   & 100   & 100   & 100   & 100   & 100   & 100   & 100 \\
				&       &       & AGK   & \textbf{5.1} & \textbf{6} & \textbf{8} & \textbf{12} & \textbf{17} & \textbf{24} & \textbf{31} & \textbf{38} & \textbf{46} & \textbf{53} & \textbf{60} & \textbf{67} & \textbf{72} & \textbf{77} & \textbf{80} & \textbf{84} & \textbf{86} & \textbf{89} & {90} & 91  & 92 \\
				&       &       & New   & 5.4   & \textbf{25} & \textbf{73} & 97  & 100  & 100   & 100   & 100   & 100   & 100   & 100   & 100   & 100   & 100   & 100   & 100   & 100   & 100   & 100   & 100   & 100 \\
				\cline{2-25}
				& \multirow{6}[3]{*}{{365}} & \multirow{3}[2]{*}{{0.0}} & Miller & 5.8 & 100 & 100 & 100 & 100 & 100  & 100  & 100   & 100   & 100   & 100   & 100   & 100   & 100   & 100   & 100   & 100   & 100   & 100   & 100   & 100 \\
				&       &       & AGK   & 4.7   & \textbf{10} & \textbf{23} & \textbf{35} & \textbf{45} & \textbf{52} & \textbf{56} & \textbf{59} & \textbf{62} & \textbf{65} & \textbf{67} & \textbf{69} & \textbf{70} & \textbf{71} & \textbf{73} & \textbf{73} & \textbf{74} & \textbf{75} & \textbf{76} & \textbf{77} & \textbf{78} \\
				&       &       & New   & \textbf{5.2} & 100   & 100   & 100   & 100   & 100   & 100   & 100   & 100   & 100   & 100   & 100   & 100   & 100   & 100   & 100   & 100   & 100   & 100   & 100   & 100 \\
				\cline{3-25}
				&       & \multirow{3}[1]{*}{{0.8}} & Miller & 5.6 & \textbf{24} & \textbf{71} & 97 & 100   & 100   & 100   & 100   & 100   & 100   & 100   & 100   & 100   & 100   & 100   & 100   & 100   & 100   & 100   & 100   & 100 \\
				&       &       & AGK   & \textbf{4.7}   & \textbf{5} & \textbf{5} & \textbf{7} & \textbf{8} & \textbf{11} & \textbf{12} & \textbf{16} & \textbf{19} & \textbf{23} & \textbf{26} & \textbf{30} & \textbf{33} & \textbf{36} & \textbf{40} & \textbf{42} & \textbf{45} & \textbf{47} & \textbf{51} & \textbf{53} & \textbf{55} \\
				&       &       & New   & \textbf{5.3}   & \textbf{29} & \textbf{78} & 98  & 100   & 100   & 100   & 100   & 100   & 100   & 100   & 100   & 100   & 100   & 100   & 100   & 100   & 100   & 100   & 100   & 100 \\
				\hhline{=========================}
			\end{tabular}
			\begin{tablenotes}
				\item \scriptsize All values are shown as percentage. The nominal level is 0.05. Monte Carlo replication 2000. Bold numbers for $k=0.0$ represent the rejection rates closest to 0.05 under the null. Bold cells for $k\neq0.0$ indicate the rejection rates less than 0.90 under the local alternatives. 
			\end{tablenotes}
		\end{table}


	Table 1 presents  the empirical sizes and the powers of our method, the AGK method and Miller's method when $c\in\{0,0.8\}$ and $d=0$. 
	The results of more comprehensive settings are presented in Appendix \ref{S4}, which are consistent to what we observe in Table 1.
	When $k=0$, sizes closest to 0.05 are presented in boldface. 
	In all our simulation settings, all methods seem to have reasonable sizes. 
	Our  and the AGK method tend to have more cases in which sizes are closer to 0.05, while
	Miller's unmodified VAT tends to slightly over-reject.

	When $k\neq0$, empirical rejection rates represent powers of the tests. 
	Powers less than 0.9 are shown in boldface. 
	When $m$ is small, our method is not as powerful as the AGK method or Miller's unmodified VAT. 
	These two methods have much better performance under all alternatives. 
	For $T=125$, when the high-frequency error is AR(1), our method is less powerful when the effect size is small ($k\leq0.6$), whether the high-frequency error is i.i.d. or not. 
	When $T=512$, the power of our method is not very large when the effect size is not large enough. 
	This observation is consistent with Theorem \ref{p1}. 
	When $m$ is small, the 2SLS estimator would not be consistent using the chosen instruments. 
	As a matter of fact, if $m=4$, the two weighted functions in constructing the instruments are almost identical. 
	Therefore, when $m$ is small, $m=4$, the AGK method seems to be good enough by choosing the most recent two high-frequency variables (out of four). 
	Miller's unmodified VAT is another attractive alternative when $m$ is small since it is as powerful as the AGK method. 
	
	However, when $m$ is large, the effect of a careful choice of instruments is more visible. 
	When $m$ is $150$, the power of he AGK method never exceeds 0.90 for all alternatives. 
	In the meantime,  our method tends to have higher power under almost all alternatives. 
	Miller's unmodified VAT tends to be just a little less powerful than our method for small effective sizes. 
	Additionally, as the sample size increases ($T=512$), the AGK method becomes more powerful for large local alternatives, while all three methods reduce the power when the effective sizes are small. 
	Except for a few small effect sizes with the AR(1) high-frequency error process, our method has the highest power for most cases.
	Similar conclusions can be drawn for $m=365$. 
	
	\begin{remark}{\rm 
		It is worth noting that when our method works, i.e., when $1/m$ is small enough, our method and Miller's method have similar finite sample performance,
		though our test tends to have slightly better sizes and powers. 
		Considering their similar formulation as mentioned in Remark \ref{remark2}, this similarity is somewhat expected.
		If one is interested in comparison between the two methods,  it would be interesting to consider more than one regressors.
		In this case, our method calls for more than two instruments, $\mathbf{z}_t$ would be different, making it easier to see the difference between the two methods.
		However, this is out of scope of this paper, we leave it as a future work.
	}\end{remark}

	\section{Conclusion}\label{conclusion}
	
	In this paper, we considered a DWH test to choose between the time-averaging models and MIDAS models.
	For the DWH test, the instruments need to be carefully chosen to avoid the problems involved with weak instruments and correlation with the error terms.
	However, there had not yet been a rigorous work regarding the proper choice of instruments.
	The main contribution of this paper is that a set of instruments has been proposed with a theoretical validation.
	In particular, the proposed instruments would only work when the frequency ratio is large enough.
	The Monte Carlo simulations reconfirm our theoretical findings.
	The DWH test with our proposed instruments is more powerful in finite samples compared to the one with a less careful choice of instruments.
	However, this is only the case when the frequency ratio is large enough.
	Therefore, our proposed specification test would be useful when handling two extremely different sampling frequencies such as monthly versus hourly observations.
	On the other hand, if the frequency ratio is very small, taking a few most recent high-frequency variables as the instruments or taking Miller's approach would be better.

	The main purpose of this paper is to provide an insight on a proper choice of instruments. To keep the exposition concise, we limited the scope of the paper  using somewhat strong assumptions.
	Now that we understand the behavior of the instruments better, 
	an extension of this paper to accommodate  more than one regressors and  general data generating process
	is underway.

	\section*{Acknowledgements}
	
	The authors are grateful to J. Isaac Miller and to the participants of the 48th annual meetings of Illinois Economics Association for constructive comments and suggestions. 
 \textbf{Superior}, a high-performance computing infrastructure at Michigan Technological University, was used in obtaining the Monte Carlo simulation results.
	This work was partially supported by NSF grant CPS-1739422.

	\begin{appendices}
		
		\section{Test Statistic $\lambda_T$ and Asymptotic Distribution}\label{S2}

		\begin{proof} [Proof of Theorem \ref{t1}]
			
			It is easy to see that under the null, the asymptotic distribution of $\widehat{\boldsymbol{\beta}}^A$ is $\sqrt{T}\left(\widehat{\boldsymbol{\beta}}^A-\boldsymbol{\beta}\right)\xrightarrow{d}N(\mathbf{0},\mathbf{V}^A)$. 
			Under both the null and the alternative, the asymptotic distribution $\widehat{\boldsymbol{\beta}}$ is $\sqrt{T}\left(\widehat{\boldsymbol{\beta}}-\boldsymbol{\beta}\right)\xrightarrow{d}N(\mathbf{0},\mathbf{V})$.  Moreover, for some matrix $\mathbf{V}^*$, we are able to derive $\sqrt{T}(\widehat{\boldsymbol\beta}-\widehat{\boldsymbol\beta}^A)\xrightarrow{d}N(\mathbf{0},\mathbf{V}^*)$. 
			Following the argument in Section 5.1 of \cite{Lee:2010}, 			 the asymptotic distribution of $\widehat{\boldsymbol{\Delta}}=\widehat{\boldsymbol\beta}-\widehat{\boldsymbol\beta}^A$ can be derived as
			\begin{equation}
			T\widehat{\boldsymbol{\Delta}}'\left(\widehat{\mathbf{V}}-\widehat{\mathbf{V}}^A\right)^{-1}\widehat{\boldsymbol{\Delta}}\xrightarrow{d}\chi^2_{rank\left(\mathbf{V}-\mathbf{V}^A\right)}.
			\end{equation}
			By noting that 
		$(\mathbf{X}\boldsymbol{\pi}_0)'\mathbf{P}_{\mathbf{Z}}\mathbf{M}_{\mathbf{X}^A}\mathbf{y}=(0,1)\left({\mathbf{X}^A}'\mathbf{P}_{\mathbf{Z}}\mathbf{X}^A\right)\widehat{\boldsymbol{\Delta}}$
and 
 $(\mathbf{X}\boldsymbol{\pi}_0)'\mathbf{M}_{\mathbf{X}^A}\mathbf{y}=0$,  
$\widehat{\delta}$ can be rewritten as
			\begin{equation}
			\begin{aligned}
			\widehat{\delta}&=\left[(\mathbf{M}_{\mathbf{X}^A}\mathbf{M}_{\mathbf{Z}}\mathbf{X}\boldsymbol{\pi}_0)'(\mathbf{M}_{\mathbf{X}^A}\mathbf{M}_{\mathbf{Z}}\mathbf{X}\boldsymbol{\pi}_0)\right]^{-1}\left(-(\mathbf{X}\boldsymbol{\pi}_0)'\mathbf{P}_{\mathbf{Z}}\mathbf{M}_{\mathbf{X}^A}\mathbf{y}\right)\\
			&=-\left[(\mathbf{M}_{\mathbf{X}^A}\mathbf{M}_{\mathbf{Z}}\mathbf{X}\boldsymbol{\pi}_0)'(\mathbf{M}_{\mathbf{X}^A}\mathbf{M}_{\mathbf{Z}}\mathbf{X}\boldsymbol{\pi}_0)\right]^{-1}(0,1)\left({\mathbf{X}^A}'\mathbf{P}_{\mathbf{Z}}\mathbf{X}^A\right)\widehat{\boldsymbol{\Delta}}\\
			&=-\left[(\mathbf{M}_{\mathbf{X}^A}\mathbf{M}_{\mathbf{Z}}\mathbf{X}\boldsymbol{\pi}_0)'(\mathbf{M}_{\mathbf{X}^A}\mathbf{M}_{\mathbf{Z}}\mathbf{X}\boldsymbol{\pi}_0)\right]^{-1}\left((\mathbf{X}\boldsymbol{\pi}_0)'\mathbf{P}_{\mathbf{Z}}\mathbf{X}^A\right)\widehat{\boldsymbol{\Delta}}\\
			&=\mathbf{b}'\widehat{\boldsymbol{\Delta}}.
			\end{aligned}
			\end{equation}
			where $\mathbf{b}'=-\left[(\mathbf{M}_{\mathbf{X}^A}\mathbf{M}_{\mathbf{Z}}\mathbf{X}\boldsymbol{\pi}_0)'(\mathbf{M}_{\mathbf{X}^A}\mathbf{M}_{\mathbf{Z}}\mathbf{X}\boldsymbol{\pi}_0)\right]^{-1}\left((\mathbf{X}\boldsymbol{\pi}_0)'\mathbf{P}_{\mathbf{Z}}\mathbf{X}^A\right)$.
				Thus,
			\begin{equation}
			\sqrt{T}\widehat{\delta}=\sqrt{T}\mathbf{b}'\widehat{\delta}=\sqrt{T}\left[\mathbf{b}'\left(\widehat{\boldsymbol{\beta}}-\boldsymbol{\beta}\right)-\mathbf{b}'\left(\widehat{\boldsymbol{\beta}}^A-\boldsymbol{\beta}\right)\right].
			\end{equation}
				The asymptotic distribution of $\mathbf{b}'\widehat{\boldsymbol{\beta}}^A$ is $\sqrt{T}\mathbf{b}'\left(\widehat{\boldsymbol{\beta}}^A-\boldsymbol{\beta}\right)\xrightarrow{d}N(0,\mathbf{b}'\mathbf{V}^A\mathbf{b})$ under the null. 
			The asymptotic distribution of $\mathbf{b}'\widehat{\boldsymbol{\beta}}$ is $\sqrt{T}\mathbf{b}'\left(\widehat{\boldsymbol{\beta}}-\boldsymbol{\beta}\right)\xrightarrow{d}N(0,\mathbf{b}'\mathbf{V}\mathbf{b})$ under both the null and the alternative. 
			Since the estimator $\mathbf{b}'\widehat{\boldsymbol{\beta}}^A$ is still consistent and efficient under the null, while the estimator $\mathbf{b}'\widehat{\boldsymbol{\beta}}$ is consistent under the null and the alternative, then
			\begin{equation}
			T{\left[\mathbf{b}'\left(\widehat{\boldsymbol{\beta}}-\widehat{\boldsymbol{\beta}}^A\right)\right]}'\left(\mathbf{b}'\widehat{\mathbf{V}}\mathbf{b}-\mathbf{b}'\widehat{\mathbf{V}}^A\mathbf{b}\right)^{-1}\left[\mathbf{b}'\left(\widehat{\boldsymbol{\beta}}-\widehat{\boldsymbol{\beta}}^A\right)\right]\xrightarrow{d}\chi^2_{rank\left(\mathbf{b}'(\mathbf{V}-\mathbf{V}^A)\mathbf{b}\right)}.
			\end{equation}
			Therefore,
			\begin{equation}
			T\widehat{\delta}'\left(\mathbf{b}'(\widehat{\mathbf{V}}-\widehat{\mathbf{V}}^A)\mathbf{b}\right)^{-1}\widehat{\delta}\xrightarrow{d}\chi^2_{rank\left(\mathbf{b}'(\mathbf{V}-\mathbf{V}^A)\mathbf{b}\right)}.
			\end{equation}
			
			Note that under our settings, $\mathbf{b}$ is a column vector with two elements. 
			The rank of $\mathbf{b}'(\mathbf{V}-\mathbf{V}^A)\mathbf{b}$ is one. Hence, the degree of freedom of $\chi^2$ distribution is one. 
		\end{proof}

		\section{Theoretical Verification of the Chosen Set of Instruments}\label{S3}

		\begin{proof} [Proof of Theorem \ref{p1}]
			
			It is obvious that our choice of instruments follows Assumption \ref{a1}\ref{a1c}. 
			Following Slutsky's theorem, it is straightforward to show that our choice of instruments satisfies Assumption \ref{a1}\ref{a1d} and \ref{a1}\ref{a1e}. So, the main part is to show that our choice of instruments satisfies Assumption \ref{a1}\ref{a1f}, i.e., $E(\mathbf{Z}'\mathbf{u}^A)$ is zero or approximates to zero as the frequency ratio $m$ approaches infinity. Assumption \ref{a1}\ref{a1g} follows.
			
			Under the null hypothesis, $\widehat{\boldsymbol\beta}^A$ is consistent to estimate $\boldsymbol\beta$, then the error process $\{u_t\}$ is exactly $\{u_t^A\}$ in (\ref{FLAT}). Therefore, following Assumption \ref{a2}\ref{a2c}, $	{u}^A_t={u}_t={\mathbf{u}_{t/m}^{(m)}}'\boldsymbol\pi(\theta)$, ${\mathbf{z}_t}'={\mathbf{x}_{t,m}^{(m)}}'\Upsilon$,
			\begin{equation*}
			T^{-1}{\mathbf{Z}}'\mathbf{u}^A=T^{-1}{\mathbf{Z}}'\mathbf{u}=T^{-1}\sum_{t=1}^{T}\mathbf{z}_tu_t=T^{-1}\sum_{t=1}^{T}\Upsilon'{\mathbf{x}_{t,m}^{(m)}}{\mathbf{u}_{t,m}^{(m)}}'\boldsymbol\pi(\theta)\xrightarrow{p}\mathbf{0}.
			\end{equation*}
			It follows that the asymptotic distribution is $T^{-1/2}\mathbf{Z}'\mathbf{u}^A\xrightarrow{d}N(0,\boldsymbol{\Sigma}_{Zu})$ for some matrix $\boldsymbol{\Sigma}_{Zu}$.

			Under the alternative hypothesis, $\widehat{\boldsymbol\beta}^A$ is not consistent, the true model is the MIDAS model in (\ref{MIDAS}), i.e. $\mathbf{y}={\mathbf{X}(\theta)}\boldsymbol\beta+\mathbf{u}$, 
			where $\mathbf{X}(\theta)=[\mathbf{j},\mathbf{X}\boldsymbol\pi(\theta)]$. Recall that $\mathbf{X}^A=[\mathbf{j},\mathbf{X}\boldsymbol{\pi}_0]$. Let ${\mathbf{x}_t^A}'$ and ${\mathbf{x}_t(\theta)}'$ be $t$-th row of $\mathbf{X}^A$ and $\mathbf{X}(\theta)$, respectively.
			Comparing the MIDAS model with the regression model in (\ref{FLAT}), $\mathbf{y}={\mathbf{X}^A}\boldsymbol\beta^A+\mathbf{u}^A$, 
			it is easy to show that $\boldsymbol\beta^A$ can be written as $\boldsymbol\beta^A=\left\{E\left(\mathbf{x}_t^A{\mathbf{x}_t^A}'\right)\right\}^{-1}\left\{E\left(\mathbf{x}_t^A{\mathbf{x}_t(\theta)}'\right)\right\}{\beta}$, 
			then    	
			\begin{equation}
			\begin{aligned}
			u_t^A&=y_t-{\mathbf{x}_t^A}'\boldsymbol\beta^A=y_t-{\mathbf{x}_t^A}'\left\{E\left(\mathbf{x}_t^A{\mathbf{x}_t^A}'\right)\right\}^{-1}\left\{E\left(\mathbf{x}_t^A{\mathbf{x}_t(\theta)}'\right)\right\}\boldsymbol{\beta}\\
			&=\left({\mathbf{x}_t(\theta)}'-{\mathbf{x}_t^A}'\left\{E\left(\mathbf{x}_t^A{\mathbf{x}_t^A}'\right)\right\}^{-1}\left\{E\left(\mathbf{x}_t^A{\mathbf{x}_t(\theta)}'\right)\right\}\right)\boldsymbol\beta+u_t\\
			&=\mathbf{A}\boldsymbol\beta+u_t,
			\end{aligned}
			\end{equation}
			where $\mathbf{A}={\mathbf{x}_t(\theta)}'-{\mathbf{x}_t^A}'\left\{E\left(\mathbf{x}_t^A{\mathbf{x}_t^A}'\right)\right\}^{-1}\left\{E\left(\mathbf{x}_t^A{\mathbf{x}_t(\theta)}'\right)\right\}$. 
			Let $\mathbf{J}_m=\mathbf{j}\mathbf{j}'$ be a all-ones matrix with dimension $m$. According to the property of $\boldsymbol{\pi}_0$ and $\boldsymbol{\pi}({\theta})$, we have $\boldsymbol{\pi}_0'\mathbf{j}=1$ and$\boldsymbol{\pi}({\theta})'\mathbf{j}=1$.
			
			Since the high-frequency processes $\{x_{t-k/m}\}$ and  $\{u_{t-k/m}\}$ are assumed to be i.i.d. or follow stationary AR(1) processes with finite second moment, respectively, for $k=0,1,\cdots,m-1, t=1,\cdots,T$ and $\sum_{i=1}^{m}\pi_i=1$, denote the variance-covariance matrix of $\mathbf{x}_{t,m}^{(m)}$ as $\boldsymbol\Phi=E\left({\mathbf{x}_{t,m}^{(m)}}{\mathbf{x}_{t,m}^{(m)}}'\right)-E\left({\mathbf{x}_{t,m}^{(m)}}\right)E\left({\mathbf{x}_{t,m}^{(m)}}\right)'$, then $E\left({\mathbf{x}_{t,m}^{(m)}}\right)=\mu\mathbf{j}$, $E\left({\mathbf{x}_{t,m}^{(m)}}{\mathbf{x}_{t,m}^{(m)}}'\right)=\boldsymbol{\Phi}+\mu^2\mathbf{J}_m$.   
			\begin{equation*}
			\begin{aligned}
			\mathbf{A}&=\left(\begin{matrix}
			1 & {\mathbf{x}_{t,m}^{(m)}}'\boldsymbol\pi(\theta)
			\end{matrix}\right)-\left(\begin{matrix}
			1 & {\mathbf{x}_{t,m}^{(m)}}'\boldsymbol\pi_0
			\end{matrix}\right)\left\{E\left(\mathbf{x}_t^A{\mathbf{x}_t^A}'\right)\right\}^{-1}\left\{E\left(\mathbf{x}_t^A{\mathbf{x}_t(\theta)}'\right)\right\}
			\end{aligned}
			\end{equation*}
			where
			\begin{gather*}
			E\left(\mathbf{x}_t^A{\mathbf{x}_t^A}'\right)=\left[\begin{matrix}
			1 & \boldsymbol\pi_0'E\left({\mathbf{x}_{t,m}^{(m)}}\right)\\
			\boldsymbol\pi_0'E\left({\mathbf{x}_{t,m}^{(m)}}\right) & \boldsymbol\pi_0'E\left({\mathbf{x}_{t,m}^{(m)}}{\mathbf{x}_{t,m}^{(m)}}'\right)\boldsymbol\pi_0
			\end{matrix}\right]=\left[\begin{matrix}
			1 & \mu\\
			\mu & \boldsymbol\pi_0'(\boldsymbol{\Phi}+\mu^2\mathbf{J}_m)\boldsymbol\pi_0
			\end{matrix}\right],\\
			E\left(\mathbf{x}_t^A{\mathbf{x}_t(\theta)}'\right)=\left[\begin{matrix}
			1 & \boldsymbol\pi(\theta)'E\left({\mathbf{x}_{t,m}^{(m)}}\right)\\
			\boldsymbol\pi_0'E\left({\mathbf{x}_{t,m}^{(m)}}\right) & \boldsymbol\pi_0'E\left({\mathbf{x}_{t,m}^{(m)}}{\mathbf{x}_{t,m}^{(m)}}'\right)\boldsymbol\pi(\theta)
			\end{matrix}\right]=\left[\begin{matrix}
			1 & \mu\\
			\mu & \boldsymbol\pi_0'(\boldsymbol{\Phi}+\mu^2\mathbf{J}_m)\boldsymbol\pi(\theta)
			\end{matrix}\right].
			\end{gather*}

			Assuming that $E\left(\mathbf{x}_t^A{\mathbf{x}_t^A}'\right)$ is invertible (if $E\left(\mathbf{x}_t^A{\mathbf{x}_t^A}'\right)$ is not invertible, we can get the generalized inverse), then we can derive
			\begin{equation*}
			\left\{E\left(\mathbf{x}_t^A{\mathbf{x}_t^A}'\right)\right\}^{-1}\left\{E\left(\mathbf{x}_t^A{\mathbf{x}_t(\theta)}'\right)\right\}=\left[\begin{matrix}
			1 & (\boldsymbol\pi_0'\boldsymbol\Phi\boldsymbol\pi_0)^{-1}\mu\boldsymbol{\pi}_0'(\boldsymbol{\Phi}+\mu^2\mathbf{J}_m)(\boldsymbol\pi_0-\boldsymbol\pi(\theta))\\
			0 & (\boldsymbol\pi_0'\boldsymbol\Phi\boldsymbol\pi_0)^{-1}\boldsymbol\pi_0'\boldsymbol\Phi\boldsymbol\pi(\theta)
			\end{matrix}\right],
			\end{equation*}

			Therefore,
			\begin{equation}\label{minus}
			\begin{aligned}
			\mathbf{A}&=\left(\begin{matrix}
			1 & {\mathbf{x}_{t,m}^{(m)}}'\boldsymbol\pi(\theta)
			\end{matrix}\right)-\left(\begin{matrix}
			1 & {\mathbf{x}_{t,m}^{(m)}}'\boldsymbol\pi_0
			\end{matrix}\right)\left\{E\left(\mathbf{x}_t^A{\mathbf{x}_t^A}'\right)\right\}^{-1}\left\{E\left(\mathbf{x}_t^A{\mathbf{x}_t(\theta)}'\right)\right\}\\
			&=\left(\begin{matrix}
			1 & {\mathbf{x}_{t,m}^{(m)}}'\boldsymbol\pi(\theta)
			\end{matrix}\right)-\left(\begin{matrix}
			1 & (\boldsymbol\pi_0'\boldsymbol\Phi\boldsymbol\pi_0)^{-1}\left\{\mu\boldsymbol{\pi}_0'(\boldsymbol{\Phi}+\mu^2\mathbf{J}_m)(\boldsymbol\pi_0-\boldsymbol\pi(\theta))+{\mathbf{x}_{t,m}^{(m)}}'\boldsymbol\pi_0\boldsymbol\pi_0'\boldsymbol\Phi\boldsymbol\pi(\theta)\right\}
			\end{matrix}\right)\\
			&=\left(\begin{matrix}
			0 & {\mathbf{x}_{t,m}^{(m)}}'\boldsymbol\pi(\theta)-(\boldsymbol\pi_0'\boldsymbol\Phi\boldsymbol\pi_0)^{-1}\left\{\mu\boldsymbol{\pi}_0'(\boldsymbol{\Phi}+\mu^2\mathbf{J}_m)(\boldsymbol\pi_0-\boldsymbol\pi(\theta))+{\mathbf{x}_{t,m}^{(m)}}'\boldsymbol\pi_0\boldsymbol\pi_0'\boldsymbol\Phi\boldsymbol\pi(\theta)\right\}
			\end{matrix}\right).
			\end{aligned}
			\end{equation}
			
			Next, calculate $E\left(\mathbf{z}_tu_t^A\right)$ where $\mathbf{z}'_t={\mathbf{x}_{t,m}^{(m)}}'\Upsilon$,
			\begin{equation}\label{zu}
			E\left(\mathbf{z}_tu_t^A\right)=E\left(\mathbf{z}_t(\mathbf{A}\boldsymbol\beta+u_t)\right)=E\left(\mathbf{z}_t\mathbf{A}\boldsymbol\beta\right)=E\left(\mathbf{z}_t\mathbf{A}\boldsymbol\left[\begin{matrix}
			\beta_0\\
			\beta_1
			\end{matrix}\right]\right).
			\end{equation}
			
			Combine (\ref{minus}) with (\ref{zu}), then
			\begin{equation}\label{Expectation}
			\begin{aligned}
			&E\left(\mathbf{z}_tu_t^A\right)\\
			&=\beta_1E\left(\mathbf{z}_t\left({\mathbf{x}_{t,m}^{(m)}}'\boldsymbol\pi(\theta)-(\boldsymbol\pi_0'\boldsymbol\Phi\boldsymbol\pi_0)^{-1}\left\{\mu\boldsymbol{\pi}_0'(\boldsymbol{\Phi}+\mu^2\mathbf{J}_m)(\boldsymbol\pi_0-\boldsymbol\pi(\theta))+{\mathbf{x}_{t,m}^{(m)}}'\boldsymbol\pi_0\boldsymbol\pi_0'\boldsymbol\Phi\boldsymbol\pi(\theta)\right\}\right)\right)\\
			&=\beta_1E\left(\Upsilon'{\mathbf{x}_{t,m}^{(m)}}\left({\mathbf{x}_{t,m}^{(m)}}'\boldsymbol\pi(\theta)-(\boldsymbol\pi_0'\boldsymbol\Phi\boldsymbol\pi_0)^{-1}\left\{\mu\boldsymbol{\pi}_0'(\boldsymbol{\Phi}+\mu^2\mathbf{J}_m)(\boldsymbol\pi_0-\boldsymbol\pi(\theta))+{\mathbf{x}_{t,m}^{(m)}}'\boldsymbol\pi_0\boldsymbol\pi_0'\boldsymbol\Phi\boldsymbol\pi(\theta)\right\}\right)\right)\\
			&=\beta_1\Upsilon'\left\{(\boldsymbol{\Phi}+\mu^2\mathbf{J}_m)\boldsymbol\pi(\theta)-(\boldsymbol\pi_0'\boldsymbol\Phi\boldsymbol\pi_0)^{-1}\mu\boldsymbol{\pi}_0'(\boldsymbol{\Phi}+\mu^2\mathbf{J}_m)(\boldsymbol\pi_0-\boldsymbol\pi(\theta))\mu\mathbf{j}\right.\\
			&\ \ \ \ \ \ \ \ \ \ \ \ \ \  \left.-(\boldsymbol\pi_0'\boldsymbol\Phi\boldsymbol\pi_0)^{-1}(\boldsymbol{\Phi}+\mu^2\mathbf{J}_m)\boldsymbol\pi_0\boldsymbol\pi_0'\boldsymbol\Phi\boldsymbol\pi(\theta)\right\}.
			\end{aligned}
			\end{equation}
			
			After simplification, (\ref{Expectation}) becomes
			\begin{equation}\label{main}
			\begin{aligned}
			E\left(\mathbf{z}_tu_t^A\right)&=\beta_1\Upsilon'\left(\boldsymbol{\Phi}\boldsymbol\pi(\theta)-(\boldsymbol\pi_0'\boldsymbol\Phi\boldsymbol\pi_0)^{-1}\boldsymbol\pi_0'\boldsymbol{\Phi}\boldsymbol\pi(\theta)\boldsymbol\Phi\boldsymbol\pi_0\right).
			\end{aligned}
			\end{equation}
			
			Note that  let $\pi_{0,i}$ be the $i$-th element of $\boldsymbol\pi_0$, $(\boldsymbol\Phi\boldsymbol\pi_0)_k$ be the $j$-th element of $\boldsymbol\Phi\boldsymbol\pi_0$ for $k=1,\cdots,m$, $\sigma_x^2$ be the variance of $x_{t-j/m}$ for any $t=1,\cdots,T$, $j=0,\cdots,m-1$. 
			Suppose the parameter in the high-frequency AR(1) process is $d$ such that $0<|d|<1$ (for i.i.d. case, let $d=0$ and define $0^0=1$), then we have
			\begin{equation}
			\begin{aligned}
			\boldsymbol\pi_0'\boldsymbol\Phi\boldsymbol\pi({\theta})&=\sum_{j=1}^{m}\sum_{i=1}^{m}\pi_{0,i}\phi_{i,j}\pi_{j}({\theta})=\sum_{j=1}^{m}\sum_{i=1}^{m}\pi_{0i}d^{|i-j|}\sigma_x^2\pi_{j}({\theta}),
			\\
			\boldsymbol\pi_0'\boldsymbol\Phi\boldsymbol\pi_0&=\sum_{j=1}^{m}\sum_{i=1}^{m}\pi_{0,i}\phi_{i,j}\pi_{0,j}=\sum_{j=1}^{m}\sum_{i=1}^{m}\pi_{0i}d^{|i-j|}\sigma_x^2\pi_{0j},\\
			(\boldsymbol\Phi\boldsymbol\pi_0)_k&=\sum_{j=1}^{m}d^{|k-j|}\sigma_x^2\pi_{0,j}.
			\end{aligned}
			\end{equation}
			
			As we mentioned above, the weighted matrix $\Upsilon=[\Upsilon_{1}\hspace{0.07cm}\Upsilon_{2}]$ is defined in (\ref{Upsilon}). Let $S_{\pi}=\sum_{i=1}^{m}(2-i/m)^{4\theta}$, $S_{\Upsilon_1}=\sum_{i=1}^{m}0.9^{i-1}$, $S_{\Upsilon_2}=\sum_{i=1}^{m}(m+1-i)$. 
			$\boldsymbol\pi(\theta)=(\pi_1(\theta), \cdots, \pi_m(\theta))'$, here $\pi_j(\theta)=(2-j/m)^{4\theta}/\sum_{i=1}^m(2-i/m)^{4\theta}$ for $j=1,2,\cdots,m$. 
			Consider two cases separately: (i) ${\mathbf{x}_{t,m}^{(m)}}$ is an i.i.d. sequence ($\boldsymbol\Phi=\sigma^2_x\mathbf{I}$ where $\mathbf{I}$ is the identity matrix); (ii) ${\mathbf{x}_{t,m}^{(m)}}$ is an AR(1) process with parameter $d$ where $0<|d|<1$.\\ 
			
			(i) When ${\mathbf{x}_{t,m}^{(m)}}$ is an i.i.d. sequence, then we can easily derive the following equations from (\ref{main}).
			\begin{equation}\label{case-i}
			E\left(\mathbf{z}_tu_t^A\right)=\beta_1\sigma_x^2\Upsilon'\left(\boldsymbol\pi(\theta)-\dfrac{\boldsymbol\pi_0'\boldsymbol\pi(\theta)}{\boldsymbol\pi_0'\boldsymbol\pi_0}\boldsymbol\pi_0\right)=\beta_1\sigma_x^2\Upsilon'\boldsymbol\pi(\theta)-\beta_1\sigma_x^2\left(\dfrac{\boldsymbol\pi_0'\boldsymbol\pi(\theta)}{\boldsymbol\pi_0'\boldsymbol\pi_0}\Upsilon'\boldsymbol\pi_0\right).
			\end{equation}
			
			Since $\Upsilon_r'\boldsymbol\pi(\theta)$ does not depend on the null $\boldsymbol{\pi}_0$, then we consider the first term for both the flat aggregation and the general case of end-of-period sampling. Since $\theta>0$, $S_{\pi}=O(m)$ and $1\leq(2-i/m)^{4\theta}\leq2^{4\theta}$ for $i=1,\cdots,m$, then 
			\begin{equation}
			\Upsilon_r'\boldsymbol\pi(\theta)=(S_{\pi}S_{\Upsilon_1})^{-1}\sum_{i=1}^{m}a_{i,r}(2-i/m)^{4\theta}
			\in[(S_{\pi})^{-1},2^{4\theta}(S_{\pi})^{-1}]=O(m^{-1}).
			\end{equation}
			
			Consider the time-averaging weights $\boldsymbol{\pi}_0$ with two cases respectively: (a) the flat aggregation weights $\boldsymbol{\pi}_0=(1/m,\cdots,1/m)'$; (b) $\boldsymbol{\pi}_0=(\pi_{0,1},\cdots,\pi_{0,n},0,\cdots,0)'$ for any fixed integer $n\in\left[0,m\right)$ independent of $m$ such that $\pi_{0,i}$ is positive constants independent of $m$ for all $i=1,\cdots,n$ and $\sum_{i=1}^{n}\pi_{0,i}=1$. 
			In particular, when $n=1$, it is the end-of-period sampling. 
			Note that for case (b), we can assumed that  $\boldsymbol{\pi}_0=(0,\cdots,0,\pi_{0,m-n+1},\cdots,\pi_{0,m})'$ or any fixed $n$ element with positive values of $\boldsymbol{\pi}_0$ with the property $\sum_{i=1}^{m}\pi_{0,i}=1$. 
			The proof will be straightforward by following similar processes shown below. 
			Without loss of generality, we only show the proof with the aggregating weight as $\boldsymbol{\pi}_0=(\pi_{0,1},\cdots,\pi_{0,n},0,\cdots,0)'$. 
			
			For case (a), 
			\begin{equation}
			\dfrac{\boldsymbol\pi_0'\boldsymbol\pi(\theta)}{\boldsymbol\pi_0'\boldsymbol\pi_0}=\dfrac{\sum_{i=1}^{m}\pi_{0,i}\pi({\theta})}{\sum_{i=1}^{m}\pi_{0,i}^2}=\dfrac{1/m\sum_{i=1}^{m}\pi({\theta})}{m\cdot(1/m^2)}=1,\ \ \ \Upsilon_r'\boldsymbol\pi_0=1/m,\text{\ for\ }r=1,2.
			\end{equation}
			
			Then, it follows that the second term  $\dfrac{\boldsymbol\pi_0'\boldsymbol\pi(\theta)}{\boldsymbol\pi_0'\boldsymbol\pi_0}\Upsilon_r'\boldsymbol\pi_0=O(m^{-1})$. 
			
			Hence, $E\left(\mathbf{z}_{t}u_t^A\right)=\left(O(m^{-1}),\ O(m^{-1})\right)'$\footnote{The notation $\left(O(m^{-1}),\ O(m^{-1})\right)'$ indicates that each element of this vector is equal to $O(m^{-1})$.}.
			
			For case (b), 
			\begin{equation}
			\dfrac{\boldsymbol\pi_0'\boldsymbol\pi(\theta)}{\boldsymbol\pi_0'\boldsymbol\pi_0}=\dfrac{\sum_{i=1}^{n}\pi_{0,i}\pi({\theta})}{\sum_{i=1}^{n}\pi_{0,i}^2}\leq\dfrac{(2-1/m)^{4\theta}\sum_{i=1}^{n}\pi_{0,i}}{S_{\pi}\sum_{i=1}^{n}\pi_{0,i}^2}=O(m^{-1}).
			\end{equation}
			
			\begin{equation}
			\begin{aligned}
			|\Upsilon_1'\boldsymbol\pi_0|&\leq\sigma_x^2\dfrac{\sum_{i=1}^{n}0.9^{i-1}}{S_{\Upsilon_1}}\max_{1\leq i\leq n}(\pi_{0,i})\leq\sigma_x^2\dfrac{1-0.9^n}{1-0.9^m}\leq0.1\sigma_x^2=O(1),\\
			|\Upsilon_2'\boldsymbol\pi_0|&\leq\sigma_x^2\dfrac{(m+m+1-n)n}{(m+1)m}\max_{1\leq i\leq n}(\pi_{0,i})\leq\dfrac{(2m+1-n)n}{(m+1)m}\sigma_x^2=O(m^{-1}).
			\end{aligned}
			\end{equation}
			
			It implies that the second term follows
			\begin{equation}
			\begin{aligned}
			\left|\dfrac{\boldsymbol\pi_0'\boldsymbol\pi(\theta)}{\boldsymbol\pi_0'\boldsymbol\pi_0}\Upsilon_1'\boldsymbol\pi_0\right|=O(m^{-1}),\ \ \left|\dfrac{\boldsymbol\pi_0'\boldsymbol\pi(\theta)}{\boldsymbol\pi_0'\boldsymbol\pi_0}\Upsilon_2'\boldsymbol\pi_0\right|=O(m^{-2}).
			\end{aligned}
			\end{equation}
			
			Since the first term dominantly determine the order of $E\left(\mathbf{z}_{t}u_t^A\right)$, then we can derive that $E\left(\mathbf{z}_{t}u_t^A\right)=\left(O(m^{-1}),\ O(m^{-1})\right)'$. 
			
			We have proved that with the i.i.d. high-frequency regressor, our choice of instruments satisfies Assumption \ref{a1}\ref{a1f} asymptotically in case (i). In case (ii) where the high-frequency regressor is an AR$(1)$ process, similar results can be drawn with either the flat aggregation or the end-of-period sampling in the more general scenario.\\

			(ii) When ${\mathbf{x}_{t,m}^{(m)}}$ is an AR(1) sequence with the parameter $|d|\in(0,1)$, recall (\ref{main}), 
			\begin{equation}\label{case-ii}
			\begin{aligned}
			E\left(\mathbf{z}_tu_t^A\right)&=\beta_1\Upsilon'\left(\boldsymbol{\Phi}\boldsymbol\pi(\theta)-(\boldsymbol\pi_0'\boldsymbol\Phi\boldsymbol\pi_0)^{-1}\boldsymbol\pi_0'\boldsymbol{\Phi}\boldsymbol\pi(\theta)\boldsymbol\Phi\boldsymbol\pi_0\right)=\beta_1\Upsilon'\boldsymbol{\Phi}\boldsymbol\pi(\theta)-\beta_1\Upsilon'\left(\dfrac{\boldsymbol\pi_0'\boldsymbol{\Phi}\boldsymbol\pi(\theta)}{\boldsymbol\pi_0'\boldsymbol\Phi\boldsymbol\pi_0}\boldsymbol\Phi\boldsymbol\pi_0\right).
			\end{aligned}
			\end{equation}
			
			Similar to the i.i.d. case, the first term $\boldsymbol{\Phi}\boldsymbol{\pi}({\theta})$ does not depend on the form of $\boldsymbol{\pi}_0$, then let $(\boldsymbol{\Phi}\boldsymbol{\pi}({\theta}))_k$ be the $k$-th element $\boldsymbol{\Phi}\boldsymbol{\pi}({\theta})$ for $k=1,\cdots,m$, \begin{equation}
			(\boldsymbol{\Phi}\boldsymbol{\pi}({\theta}))_k=\sigma_x^2\sum_{j=1}^{m}d^{|k-j|}\pi_j=\sigma_x^2\left(\sum_{i=k}^{m}d^{i-k}\pi_i+\sum_{j=1}^{k-1}d^{k-j}\pi_{j}\footnotemark\right).
			\end{equation}
			Note that when $k=1$, let $\sum_{j=1}^{k-1}d^j\pi_{k-j}=0$.
			\footnotetext{To simplify the notation, we will use $\pi_j$ as $j$-th element of $\boldsymbol{\pi}({\theta})$ instead of $\pi_j({\theta})$.}
			
			Recall that in (\ref{Upsilon}), we define $\Upsilon_1$ and $\Upsilon_2$ as
			\begin{equation}
			\begin{aligned}
			\Upsilon_1&=(f_1(1), f_1(2), \cdots, f_1(m))',\ \text{where }f_1(j)=0.9^{j-1}/\sum_{i=1}^{m}0.9^{i-1},~~\\
			\Upsilon_2&=(f_2(1), f_2(2), \cdots, f_2(m))',\ \text{where }f_2(j)=2(m+1-j)/\{m(m+1)\},
			\end{aligned}
			\end{equation}
			for $j=1,\cdots,m$.
			
			Since $S_{\pi}=\sum_{i=1}^{m}\pi_i=\sum_{i=1}^{m}(2-i/m)^{4\theta}\in[m,2^{4\theta}m]$, for $r=1,2$, 
			\begin{equation}\label{ineq1}
			\begin{aligned}
			\left|\Upsilon_r'\boldsymbol{\Phi}\boldsymbol{\pi}({\theta})\right|
			&=\sigma_x^2\left|\sum_{k=1}^{m}f_r(k)(\boldsymbol{\Phi}\boldsymbol{\pi}({\theta}))_k\right|
			=\sigma_x^2\left|\sum_{k=1}^{m}f_r(k)\left(\sum_{i=k}^{m}d^{i-k}\pi_i +\sum_{j=1}^{k-1}d^{k-j}\pi_{j}\right)\right|\\
			&\leq\sigma_x^2 \sum_{k=1}^{m}f_r(k)\dfrac{2^{4\theta}}{S_{\pi}}\left(\sum_{i=k}^{m}|d|^{i-k}+\sum_{j=1}^{k-1}|d|^{k-j}\right)\\
			&=\sigma_x^2\cdot\dfrac{2^{4\theta}}{S_{\pi}}\cdot\dfrac{\sum_{k=1}^{m}f_r(k)\left(1+|d|-|d|^{m-k+1}-|d|^{k}\right)}{1-|d|}\\
			&<\sigma_x^2\cdot\dfrac{2^{4\theta}}{S_{\pi}}\cdot\dfrac{\sum_{k=1}^{m}f_r(k)\left(1+|d|\right)}{1-|d|}\leq m^{-1}\sigma_x^2C_1(d,{\theta}),
			\end{aligned}
			\end{equation}
			where $C_1(d,\theta)=\dfrac{2^{4\theta}(1+|d|)}{1-|d|}$ depends on $d$ and ${\theta}$, but is independent of $m$. Therefore, the first term $\Upsilon_r'\boldsymbol{\Phi}\boldsymbol{\pi}({\theta})=O(m^{-1})$ for $r=1,2$.
			
			Consider case (a) and (b) mentioned above. 
			
			For case (a),
			\begin{equation}\label{ll}
			\begin{aligned}
			\boldsymbol{\pi}_0'\boldsymbol{\Phi}\boldsymbol{\pi}_0&=\sigma_x^2\dfrac{m(1-d^2)-2d+2d^{m+1}}{m^2(1-d)^2},\\
			\boldsymbol{\pi}_0'\boldsymbol{\Phi}\boldsymbol{\pi}({\theta})&=\sigma_x^2\dfrac{(1+d)-\sum_{i=1}^{m}(d^i+d^{m+1-i})\pi_i}{m(1-d)},\\
			(\boldsymbol{\Phi}\boldsymbol{\pi}_0)_k&=\sigma_x^2\dfrac{1+d-d^{m-k+1}-d^{k}}{m(1-d)},
			\end{aligned}
			\end{equation}
			where $(\boldsymbol{\Phi}\boldsymbol{\pi}_0)_k$ is the $k$-th element of $\boldsymbol{\Phi}\boldsymbol{\pi}_0$.
			
			Based on (\ref{ll}), the second term of (\ref{case-ii}) follows
			\begingroup
			\allowdisplaybreaks
			\begin{align*}
			&\left|\Upsilon_r'(\boldsymbol{\pi}_0'\boldsymbol{\Phi}\boldsymbol{\pi}_0)^{-1}\boldsymbol{\pi}_0'\boldsymbol{\Phi}\boldsymbol{\pi}({\theta})\boldsymbol{\Phi}\boldsymbol{\pi}_0\right|=|(\boldsymbol{\pi}_0'\boldsymbol{\Phi}\boldsymbol{\pi}_0)^{-1}||\boldsymbol{\pi}_0'\boldsymbol{\Phi}\boldsymbol{\pi}({\theta})||\Upsilon_r'\boldsymbol{\Phi}\boldsymbol{\pi}_0|\\
			&=\dfrac{\sigma_x^2}{m(1-d^2)-2d+2d^{m+1}}\left|(1+d)-\sum_{i=1}^{m}(d^i+d^{m+1-i})\pi_i\right|\left|\sum_{k=1}^{m}({1+d-d^{m-k+1}-d^{k}})f_r(k)\right|\\
			&\leq\dfrac{\sigma_x^2}{|m(1-d^2)-2d|}\left(1+|d|+\left|\sum_{i=1}^{m}(d^i+d^{m+1-i})\pi_i\right|\right)\left(\sum_{k=1}^{m}({1+|d|+|d|^{m-k+1}+|d|^{k}})f_r(k)\right)\\
			&\leq\dfrac{\sigma_x^2}{m(1-d^2)-2|d|}\left(1+|d|+(|d|+|d|)\sum_{i=1}^{m}\pi_i\right)\left((1+|d|+|d|+|d|)\sum_{k=1}^{m}f_r(k)\right)\\
			&\leq\dfrac{\sigma_x^2(1+3|d|)^2}{m(1-d^2)-2|d|}=O(m^{-1}). \numberthis
			\end{align*}%
			\endgroup
			
			Hence, both the first term and the second term of (\ref{case-ii}) are $O(m^{-1})$ for two instruments. It follows that $E(\mathbf{z}_tu_t^A)=\left(O(m^{-1}),\ O(m^{-1})\right)'$.
			
			Now, consider case (b), the general case of the end-of-period sampling. We still assume that $\boldsymbol{\pi}_0=(\pi_{0,1},\cdots,\pi_{0,n},0,\cdots,0)'$ for any integer $n\in\left[0,m\right)$ independent of $m$ such that $\pi_{0,i}$ is positive constants independ of $m$ for all $i=1,\cdots,n$ and $\sum_{i=1}^{n}\pi_{0,i}=1$. Since we assume that only the first $n$ elements can be assigned with positive values which are no greater than 1, then the $k$-th element of $\boldsymbol{\pi}_0'\boldsymbol{\Phi}$ is
			\begin{equation}
			(\boldsymbol{\pi}_0'\boldsymbol{\Phi})_k=\left\{\begin{matrix}
			\sigma_x^2\left(\sum_{i=k}^{n}\pi_{0,i}d^{i-k}+\sum_{j=1}^{k-1}\pi_{0,j}d^j\right), & 1\leq k\leq n,\\
			\sigma_x^2d^{k-n}\sum_{p=1}^{n}d^{n-p}\pi_{0,p}, & n<k\leq m.
			\end{matrix}\right.
			\end{equation}
			
			Then, similar to the i.i.d. case, we can derive the followings for $r=1,2$.
			\begingroup
			\allowdisplaybreaks
			\begin{align*}
			\boldsymbol{\pi}_0'\boldsymbol{\Phi}\boldsymbol{\pi}_0&=\sigma_x^2\sum_{k=1}^{n}\left(\sum_{i=k}^{n}\pi_{0,i}d^{i-k}+\sum_{j=1}^{k-1}\pi_{0,j}d^j\right)\pi_{0,k}=\sigma_x^2D_0(d,n;\boldsymbol{\pi}_0),\\
			\boldsymbol{\pi}_0'\boldsymbol{\Phi}\boldsymbol{\pi}({\theta})&=\sigma_x^2\sum_{k=1}^{n}\left(\sum_{i=k}^{n}\pi_{0,i}d^{i-k}+\sum_{j=1}^{k-1}\pi_{0,j}d^j\right)\pi_{k}+\sigma_x^2\sum_{k=n+1}^{m}\left(d^{k-n}\sum_{p=1}^{n}d^{n-p}\pi_{0,p}\right)\pi_{k}\\
			&\leq\sigma_x^2\cdot \dfrac{2^{4\theta}}{S_{\pi}}\left(D_1(d,n;\boldsymbol{\pi}_0)+\left(\sum_{k=n+1}^{m}d^{k-n}\pi_{k}\cdot\sum_{p=1}^{n}d^{n-p}\pi_{0,p}\right)\right)\\
			&\leq\sigma_x^2\cdot \dfrac{2^{4\theta}}{S_{\pi}}\left(D_1(d,n;\boldsymbol{\pi}_0)+\dfrac{1-d^{m-n+1}}{1-d}D_2({d,n;\boldsymbol{\pi}_0})\right),\\
			\Upsilon_r'\boldsymbol{\Phi}\boldsymbol{\pi}_0&=\sigma_x^2\sum_{k=1}^{n}\left(\sum_{i=k}^{n}\pi_{0,i}d^{i-k}+\sum_{j=1}^{k-1}\pi_{0,j}d^j\right)f_r(k)+\sigma_x^2\sum_{k=n+1}^{m}\left(d^{k-n}\sum_{p=1}^{n}d^{n-p}\pi_{0,p}\right)f_r(k)\\
			&\leq\sigma_x^2\max_{1\leq k\leq m}f_r(k)\cdot\left(D_1(d,n;\boldsymbol{\pi}_0)+\dfrac{1-d^{m-n+1}}{1-d}D_2({d,n;\boldsymbol{\pi}_0})\right), \numberthis
			\end{align*}
			\endgroup
			where  
			$D_1(d,n;\boldsymbol{\pi}_0)=\sum_{k=1}^{n}\left(\sum_{i=k}^{n}\pi_{0,i}d^{i-k}+\sum_{j=1}^{k-1}\pi_{0,j}d^j\right)$ and $D_2(d,n;\boldsymbol{\pi}_0)=\sum_{p=1}^{n}d^{n-p}\pi_{0,p}$ relies on $d,\ n$ and $\boldsymbol{\pi}_0$.
			
			Therefore, we can derive that
			\begin{equation}
			\begin{aligned}
			&\left|\Upsilon_r'(\boldsymbol{\pi}_0'\boldsymbol{\Phi}\boldsymbol{\pi}_0)^{-1}\boldsymbol{\pi}_0'\boldsymbol{\Phi}\boldsymbol{\pi}({\theta})\boldsymbol{\Phi}\boldsymbol{\pi}_0\right|\\
			&\leq\dfrac{\sigma_x^2\cdot\max_{1\leq k\leq m}f_r(k)}{|D_0(d,n;\boldsymbol{\pi}_0)|}\cdot\dfrac{2^{4\theta}}{S_{\pi}}\cdot\left(D_1(d,n;\boldsymbol{\pi}_0)+\dfrac{1-d^{m-n+1}}{1-d}D_2({d,n;\boldsymbol{\pi}_0})\right)^2=O(m^{-1}).
			\end{aligned}
			\end{equation}
			
			Hence, both the first term and the second term of (\ref{case-ii}) are $O(m^{-1})$ for two instruments. It follows that $E(\mathbf{z}_tu_t^A)=\left(O(m^{-1}),\ O(m^{-1})\right)'$.
			
			Therefore, for either the i.i.d. or the AR(1) high-frequency regressor, $E(z_{r,t}u_t^A)=O(m^{-1})$ for $r=1,2$ can be satisfied with either the flat aggregation $\boldsymbol{\pi}_0=(1/m,\cdots,1/m)'$ or the general case of the end-of-period sampling $\boldsymbol{\pi}_0=(\pi_{0,1},\cdots,\pi_{0,n},0,\cdots,0)'$.
			
		\end{proof}

		\section{Full Simulation Results}\label{S4}

		All three methods perform similar sizes close to 0.05. By choosing our choice of instruments, larger powers are presented generally for large frequency ratios. However, our method does not perform larger powers for small frequency ratio, especially with small alternatives.   
		
		\begin{table}[H]
			\setlength\tabcolsep{1.6pt}
			\centering
			\scriptsize
			\caption[Caption for LOF]{Empirical Sizes and Powers for the Simulation Model: $T=125$, $m=4$}

		\end{table}

	\end{appendices}

	\newpage
	
	\medskip

	\bibliography{References}

\end{document}